\newtheorem{theorem}{Theorem}
\newtheorem{definition}{Definition}
\newtheorem{lemma}{Lemma}
\newtheorem{construction}{Construction}
\newtheorem{remark}{Remark}
\newtheorem{observation}{Observation}
\newtheorem{corollary}{Corollary}
\begin{document}

\title{Criss-Cross Deletion Correcting Codes: Optimal Constructions with Efficient Decoders}
%\title{Optimal Constructions of Criss-Cross deletion correcting Codes with Efficient Decoding Algorithms}

\author{Yubo~Sun and Gennian~Ge%
\thanks{This research was supported by the National Key Research and Development Program of China under Grant 2025YFC3409900, the National Natural Science Foundation of China under Grant 12231014, and Beijing Scholars Program.}
\thanks{Y. Sun ({\tt 2200502135@cnu.edu.cn}) and G. Ge ({\tt gnge@zju.edu.cn}) are with the School of Mathematical Sciences, Capital Normal University, Beijing 100048, China.}
}

\maketitle

\begin{abstract}
This paper addresses fundamental challenges in two-dimensional error correction by constructing optimal codes for \emph{criss-cross deletions}.
We consider an $ n \times n $ array $\bm{X}$ over a $ q $-ary alphabet $\Sigma_q := \{0, 1, \ldots, q-1\}$ that is subject to a \emph{$(t_r, t_c)$-criss-cross deletion}, which involves the simultaneous removal of $ t_r $ rows and $ t_c $ columns.  
A code $\mathcal{C} \subseteq \Sigma_q^{n \times n}$ is defined as a \emph{$(t_r,t_c)$-criss-cross deletion correcting code} if it can successfully correct these deletions. 

Our primary technical contributions are as follows:
\begin{itemize}
    \item \emph{Theoretical Bounds:} We derive a sphere-packing type lower bound and a Gilbert-Varshamov type upper bound on the redundancy of optimal codes.
        Our results indicate that the optimal redundancy for a $(t_r, t_c)$-criss-cross deletion correcting code lies between $(t_r + t_c)n\log q + (t_r + t_c)\log n + O_{q,t_r,t_c}(1)$ and $(t_r + t_c)n\log q + 2(t_r + t_c)\log n + O_{q,t_r,t_c}(1)$, where the logarithm is on base two, and $O_{q,t_r,t_c}(1)$ is a constant that depends solely on $q$, $t_r$, and $t_c$.

   \item \emph{Optimal Constructions:} For the case of $(1,1)$-criss-cross deletions, we propose two families of constructions that achieve $2n\log q + 2\log n + O_q(1)$ bits of redundancy. This redundancy is optimal up to an additive constant term $O_q(1)$, which depends solely on $q$. One family is designed for non-binary alphabets, while the other addresses arbitrary alphabets. For the case of $(t_r, t_c)$-criss-cross deletions, we provide a strategy to derive optimal codes when both unidirectional deletions occur consecutively. 
       
   \item \emph{Efficient Decoders:} We propose decoding algorithms with a time complexity of $O(n^2)$ for our codes, which are optimal for two-dimensional scenarios. 
\end{itemize}
\end{abstract}

\begin{IEEEkeywords}
Two dimensional, criss-cross deletion, burst deletion, sphere-packing bound, Gilbert-Varshamov bound, optimal code
\end{IEEEkeywords}

\section{Introduction}
Two-dimensional error-correcting codes, where codewords are represented as two-dimensional arrays, have garnered significant attention due to their critical role in modern data storage systems.
These codes find applications in both established technologies like QR codes \cite{Abdel-Ghaffar-98, Yaakobi-10} and emerging storage paradigms including racetrack memories \cite{Lund-00-ISIT, Chee-21-ISIT}. 
While substantial progress has been made in combating substitutions and erasures for two-dimensional arrays \cite{Blaum-00, Gabidulin-85, Gabidulin-08-DCC, Roth-91, Roth-97, Sidorenko-76, Wachter-Zeh-17-IT}, 
the design of codes resilient to \emph{deletions} remains a fundamental challenge with critical implications for next-generation storage technologies.

The inherent difficulty of deletion correction persists even in one-dimensional settings, where constructing efficient deletion correcting codes remained an open problem until recent breakthroughs \cite{Brakensiek-18-IT-kD, Cai-21-E, Gabrys-23-IT-DS, Guruswami-21-IT-2D, Gabrys-19-IT-2D, Levenshtein-66-SPD-1D, Li-23-DS, Liu-24-IT, Pi-24-arXiv-2E, Nguyen-24-IT-1D, Song-22-IT, Sima-21-IT-kD, Sun-24-IT, Sun-25-IT, Sima-20-ISIT-q, Sima-20-ISIT, Sima-20-ISIT-tD, Song-22-IT-DS, Sima-20-IT-2D, Tenengolts-84-IT-q_D, Ye-24-arXiv}. 
In two-dimensional scenarios, the problem becomes significantly more complex when considering \emph{$(t_r,t_c)$-criss-cross deletions} – simultaneous deletion of $t_r$ rows and $t_c$ columns.
While unidirectional deletions (either rows or columns alone) can be reduced to one-dimensional analogs, the combined row-column deletion scenario creates an intricate combinatorial structure that resists simple dimensional reduction techniques, necessitating fundamentally new coding strategies.

Current approaches to two-dimensional deletion correction \cite{Bitar-21-IT, Chee-21-ISIT, Hagiwara-20-ISIT, Hagiwara-23, Welter-22-IT} primarily employ margin encoding strategies. %rather than full row/column encoding.
For $(1,1)$-criss-cross deletion correcting codes, Bitar et al. \cite{Bitar-21-IT} established a sphere-packing lower bound of  $2n\log q +2\log n+ O_q(1)$ redundancy for optimal codes and constructed codes achieving a redundancy of $2n\log q +4\log n+O_q(1)$.
Subsequent work by Chee et al. \cite{Chee-21-ISIT} improved this construction to $2n + 2\log n+ O(\log\log n)$ bits for binary codes.
For general $(t_r,t_c)$-criss-cross deletion correcting codes, existing binary constructions \cite{Chee-21-ISIT, Welter-22-IT} require $(t_r+t_c)n + O_{t_r,t_c}\big((t_r+t_c)^2\log^2 n\big)$ redundancy, leaving significant room for improvement.

In this paper, motivated by both theoretical and practical perspectives, we study two-dimensional codes capable of correcting $(t_r,t_c)$-criss-cross deletions.
Firstly, we establish theoretical bounds for optimal $q$-ary $(t_r,t_c)$-criss-cross deletion correcting codes. For the lower bound, we develop a sphere-packing type bound through the analysis of arrays containing at least one row and one column with sufficiently large numbers of runs. For the upper bound, we derive a Gilbert-Varshamov type bound by enumerating arrays obtainable through the deletion of $t_r$ rows and $t_c$ columns, followed by the insertion of an equivalent number of rows and columns.
Our analysis reveals that the optimal redundancy for $(t_r, t_c)$-criss-cross deletion correcting codes ranges from 
\[(t_r + t_c)n\log q + (t_r + t_c)\log n + O_{q,t_r,t_c}(1)\]
to 
\[(t_r + t_c)n\log q + 2(t_r + t_c)\log n + O_{q,t_r,t_c}(1).\]
Then we examine the construction of $(1,1)$-criss-cross deletion correcting codes, which represents the first non-trivial case in the two-dimensional context.
Unlike previous works that employed margin encoding strategies, we adopt a strategy of full row/column encoding.
A key innovation involves defining the \emph{column composition representation} $\mathrm{CCR}(\boldsymbol{X})$, which captures the composition of each column in the array $\boldsymbol{X}$.  
Our crucial observation indicates that recovering the deleted symbol in each column inherently provides information about the column composition.
By enforcing the following two constraints:
\begin{itemize}
    \item $\mathrm{CCR}(\boldsymbol{X})$ belongs to a one-dimensional deletion correcting code,
    \item adjacent entries in $\mathrm{CCR}(\boldsymbol{X})$ maintain distinct values,
\end{itemize}
we can precisely determine the column deletion position and correct the column deletion. 
Subsequent row deletion correction is reduced to equivalent one-dimensional deletion correction. 
This method leads to the construction of non-binary codes with $2n\log q + 2\log n+O_q(1)$ bits of redundancy. 
While effective for non-binary alphabets, the adjacency constraint in $\mathrm{CCR}(\boldsymbol{X})$ necessitates a sufficiently large amount of redundancy for binary alphabets. 
Therefore, when considering binary alphabets, we relax this constraint to allow identical adjacent entries while prohibiting a limited number of consecutive identical entries. This modification introduces positional ambiguity confined to small intervals, which we resolve through:
\begin{itemize}
    \item locating the row deletion position within a small interval using similar constraints;
    \item correcting deletions within small intervals.
\end{itemize}
This refined approach yields $(1,1)$-criss-cross deletion correcting codes with $2n\log q + 2\log n + O_q(1)$ bits of redundancy for arbitrary alphabets.
Both constructions match the sphere-packing bound, indicating that they are optimal.
Finally, for the construction of $(t_r, t_c)$-criss-cross deletion correcting codes, we examine the scenario where both unidirectional deletions occur consecutively, termed \emph{$(t_r, t_c)$-criss-cross burst-deletion}, and present a strategy to derive optimal codes.

The remainder of this paper is organized as follows. Section~\ref{sec:pre} introduces the relevant notations, definitions, and important tools used throughout the paper. 
Section~\ref{sec:bound} establishes fundamental bounds on optimal $(t_r, t_c)$-criss-cross deletion correcting codes. 
Sections~\ref{sec:non-binary} and \ref{sec:binary} present constructions of $(1, 1)$-criss-cross deletion correcting codes that are optimal for non-binary and arbitrary alphabets, respectively, along with their decoding strategies.  
Section~\ref{sec:burst} suggests a strategy to construct optimal $(t_r, t_c)$-criss-cross burst-deletion correcting codes for arbitrary alphabets.
Finally, Section~\ref{sec:concl} concludes the paper and discusses open problems.

%Treating rows/columns as integers leverages large-alphabet properties, enabling redundancy reduction under constraints like run-length limits, critical for practical implementations.

\section{Preliminaries}\label{sec:pre}

\subsection{Notations}

We now introduce the key notations used throughout this paper.

Let $[i:j]$ (where $i\leq j$) denote the integer interval $\{i,i+1,\ldots,j\}$.
For an integer $q\geq 2$, let $\Sigma_q:= [0:q-1]$ represent the $q$-ary alphabet. We denote by $\Sigma_q^n$ the set of all $q$-ary sequences (or vectors) of length $n$, and by $\Sigma_q^{n\times n}$ the set of all $q$-ary arrays of size $n\times n$.
For any sequence $\boldsymbol{x} \in \Sigma_q^n$, let $x_i$ denote its $i$-th entry where $i \in [1:n]$.
Thus, we can express $\boldsymbol{x}$ as $(x_1,x_2,\ldots,x_n)$.
The substring spanning positions $i_1$ to $i_2$ is denoted by $\boldsymbol{x}_{[i_1:i_2]}$.
For any array $\boldsymbol{X} \in \Sigma_q^{n\times n}$, let $X_{i,j}$ denote the entry at the $i$-th row and $j$-th column, where $i,j \in [1:n]$.
The full array can be represented as:
\begin{align*}
    \boldsymbol{X}= 
  \begin{bmatrix}
    X_{1,1}   & X_{1,2}  & \cdots & X_{1,n}\\
    X_{2,1}   & X_{2,2}  & \cdots & X_{2,n}\\
    \vdots    & \vdots   & \ddots & \vdots\\
    X_{n,1}   & X_{n,2}  &\cdots & X_{n,n}
  \end{bmatrix}.
\end{align*}
\begin{comment}
\begin{align*}
    \boldsymbol{X}= 
  \begin{bmatrix}
    X_{1,1}   & \cdots   & X_{1,j}   & \cdots & X_{1,n}\\
    \vdots    & \ddots & \vdots      & \ddots & \vdots\\
    X_{i,1}   & \cdots & X_{i,j}   & \cdots & X_{i,n}\\
    \vdots    & \ddots & \vdots      & \ddots & \vdots\\
    X_{n,1}   & \cdots & X_{n,j}   & \cdots & X_{n,n}
  \end{bmatrix}.
\end{align*}
\begin{align*}
    \boldsymbol{X}= 
  \begin{bmatrix}
    X_{1,1}   & \cdots & X_{1,j-1}   & X_{1,j}   & X_{1,j+1}   & \cdots & X_{1,n}\\
    \vdots    & \ddots & \vdots      & \vdots    & \vdots      & \ddots & \vdots\\
    X_{i-1,1} & \cdots & X_{i-1,j-1} & X_{i-1,j} & X_{i-1,j+1} & \cdots & X_{i-1,n}\\
    X_{i,1}   & \cdots & X_{i,j-1}     & X_{i,j}   & X_{i,j+1}   & \cdots & X_{i,n}\\
    X_{i+1,1} & \cdots & X_{i+1,j-1} & X_{i+1,j} & X_{i+1,j+1} & \cdots & X_{i+1,n}\\
    \vdots    & \ddots & \vdots      & \vdots    & \vdots      & \ddots & \vdots\\
    X_{n,1}   & \cdots & X_{n,j-1}   & X_{n,j}   & X_{n,j+1}   & \cdots & X_{n,n}
  \end{bmatrix}.
\end{align*}
\end{comment}
We use $\boldsymbol{X}_{i,[1:n]}$ and $\boldsymbol{X}_{[1:n],j}$ to denote the $i$-th row and $j$-th column of $\boldsymbol{X}$, respectively.
The subarray spanning rows $i_1$ to $i_2$ and columns $j_1$ to $j_2$ is denoted by $\boldsymbol{X}_{[i_1:i_2],[j_1:j_2]}$.
This enables the following block decomposition:
\begin{equation}\label{eq:matrix}
    \boldsymbol{X}= 
    \begin{bmatrix}
        \boldsymbol{X}_{[1:i-1],[1:j-1]}          & \boldsymbol{X}_{[1:i-1],j} & \boldsymbol{X}_{[1:i-1],[j+1:n]}\\
        \boldsymbol{X}_{i,[1:j-1]}  & X_{i,j}  & \boldsymbol{X}_{i,[j+1:n]}  \\
        \boldsymbol{X}_{[i+1:n],[1:j-1]}          & \boldsymbol{X}_{[i+1:n],j} & \boldsymbol{X}_{[i+1:n],[j+1:n]}\\
      \end{bmatrix}.
\end{equation}
The transpose of $\boldsymbol{X}$, denoted as $\boldsymbol{X}^T$, satisfies $X_{i,j}^T=X_{j,i}$, for $i,j\in [1:n]$. 

\subsection{Error Models and Codes}

We now formalize row and column deletion errors.
Let $\boldsymbol{X}$ be defined as in Equation (\ref{eq:matrix}).
\begin{itemize}
    \item A \emph{row deletion at position $i$} in $\boldsymbol{X}$ removes the $i$-th row, resulting in an $(n-1)\times n$ array:
    \begin{equation*}
        \left[
          \begin{matrix}
            \boldsymbol{X}_{[1:i-1],[1:j-1]}          & \boldsymbol{X}_{[1:i-1],j} & \boldsymbol{X}_{[1:i-1],[j+1:n]}\\
            \boldsymbol{X}_{[i+1:n],[1:j-1]}          & \boldsymbol{X}_{[i+1:n],j} & \boldsymbol{X}_{[i+1:n],[j+1:n]}\\
          \end{matrix}
      \right].
    \end{equation*}

    \item A \emph{column deletion at position $j$} in $\boldsymbol{X}$ removes the $j$-th column, resulting an $n\times (n-1)$ array:
    \begin{equation*}
        \left[
          \begin{matrix}
            \boldsymbol{X}_{[1:i-1],[1:j-1]}          & \boldsymbol{X}_{[1:i-1],[j+1:n]}\\
            \boldsymbol{X}_{i,[1:j-1]}  & \boldsymbol{X}_{i,[j+1:n]}  \\
            \boldsymbol{X}_{[i+1:n],[1:j-1]}        & \boldsymbol{X}_{[i+1:n],[j+1:n]}\\
          \end{matrix}
      \right].
    \end{equation*}

    \item A \emph{$(1,1)$-criss-cross deletion at location $(i,j)$} in $\boldsymbol{X}$ removes both the $i$-th row and $j$-th column, yielding an $(n-1) \times (n-1)$ array:
    \begin{equation*}
        \left[
          \begin{matrix}
            \boldsymbol{X}_{[1:i-1],[1:j-1]}          & \boldsymbol{X}_{[1:i-1],[j+1:n]}\\
            \boldsymbol{X}_{[i+1:n],[1:j-1]}        & \boldsymbol{X}_{[i+1:n],[j+1:n]}\\
          \end{matrix}
      \right].
    \end{equation*}
\end{itemize}

For general integers $t_r$ and $t_c$, a \emph{$(t_r, t_c)$-criss-cross deletion} in $\boldsymbol{X}$ refers to the removal of any $ t_r $ rows and $ t_c $ columns.
The collection of all possible $(n-t_r)\times (n-t_c)$ arrays resulting from such deletions is denoted by $\mathbb{D}_{t_r,t_c}(\boldsymbol{X})$, called the \emph{$(t_r, t_c)$-criss-cross
deletion ball} of $\boldsymbol{X}$.
A code $\mathcal{C}\subseteq \Sigma_q^{n\times n}$ is called a \emph{$q$-ary $(t_r, t_c)$-criss-cross deletion correcting code} if it satisfies:
\begin{align*}
    \mathbb{D}_{t_r,t_c}(\boldsymbol{X})\cap \mathbb{D}_{t_r,t_c}(\boldsymbol{Z})=\emptyset, \quad\forall \boldsymbol{X}\neq \boldsymbol{Z}\in \mathcal{C}.
\end{align*}
%The fundamental case $ t_r = t_c = 1 $, hereafter referred to as \emph{criss-cross deletion} for simplicity, serves as our primary focus for code constructions.
When deletions occur consecutively, we refer to these deletions as a \emph{burst-deletion}. 
A \emph{$(t_r, t_c)$ criss-cross burst-deletion} in $\boldsymbol{X}$ refers to the removal of any $ t_r $ consecutive rows and $ t_c $ consecutive columns, and the \emph{$(t_r, t_c)$-criss-cross
burst-deletion ball} of $\boldsymbol{X}$ is denoted by $\mathbb{D}_{t_r,t_c}^{\mathrm{burst}}(\boldsymbol{X})$.
A code $\mathcal{C}\subseteq \Sigma_q^{n\times n}$ is called a \emph{$q$-ary $(t_r, t_c)$-criss-cross burst-deletion correcting code} if it satisfies:
\begin{align*}
    \mathbb{D}_{t_r,t_c}^{\mathrm{burst}}(\boldsymbol{X})\cap \mathbb{D}_{t_r,t_c}^{\mathrm{burst}}(\boldsymbol{Z})=\emptyset, \quad\forall \boldsymbol{X}\neq \boldsymbol{Z}\in \mathcal{C}.
\end{align*}

Analogously, a \emph{$(t_r, t_c)$-criss-cross insertion} involves inserting $t_r$ rows and $t_c$ columns, while a \emph{$(t_r, t_c)$-criss-cross burst-insertion} involves inserting $t_r$ consecutive rows and $t_c$ consecutive columns. Let $\mathbb{I}_{t_r, t_c}(\boldsymbol{X})$ and $\mathbb{I}_{t_r, t_c}^{\mathrm{burst}}(\boldsymbol{X})$ be the \emph{$(t_r, t_c)$-criss-cross insertion ball} and \emph{$(t_r, t_c)$-criss-cross burst-insertion ball} of $\boldsymbol{X}$, respectively. A code $\mathcal{C} \subseteq \Sigma_q^{n \times n}$ is a \emph{$q$-ary $(t_r, t_c)$-criss-cross insertion correcting code} (respectively, a \emph{$q$-ary $(t_r, t_c)$-criss-cross burst-insertion correcting code}) if it can uniquely recover any array affected by the corresponding types of insertions.  

To evaluate an error-correcting code $\mathcal{C}\subseteq \Sigma_q^{n\times n}$, we calculate its \emph{redundancy}, defined as $n^2 \log q- \log |\mathcal{C}|$, where $\log$ denotes base-$2$ logarithm and $|\mathcal{C}|$ is the cardinality of the code.

\begin{remark}
The order of row deletions and column deletions is commutative in terms of the final result. Therefore, without loss of generality, we adopt the convention that row deletions precede column deletions.
\end{remark}

Bitar et al. \cite{Bitar-21-IT} and Welter et al. \cite{Welter-22-IT} established foundational equivalences between criss-cross deletion correcting codes and their insertion-correcting counterparts.

\begin{lemma}\cite[Corollary 2]{Bitar-21-IT}
    Let $t_r=t_c=t$. A code $\mathcal{C}\subseteq \Sigma_q^{n\times n}$ is a $(t,t)$-criss-cross deletion correcting code if and only if it is a $(t,t)$-criss-cross insertion correcting code.
\end{lemma}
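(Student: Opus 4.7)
The plan is to establish the stronger combinatorial equivalence
\[\mathbb{D}_{t,t}(\boldsymbol{X}) \cap \mathbb{D}_{t,t}(\boldsymbol{Z}) \neq \emptyset \iff \mathbb{I}_{t,t}(\boldsymbol{X}) \cap \mathbb{I}_{t,t}(\boldsymbol{Z}) \neq \emptyset\]
for any $\boldsymbol{X}, \boldsymbol{Z} \in \Sigma_q^{n\times n}$, from which the lemma follows by contraposition together with the immediate reciprocity $\boldsymbol{Y} \in \mathbb{I}_{t,t}(\boldsymbol{X}) \iff \boldsymbol{X} \in \mathbb{D}_{t,t}(\boldsymbol{Y})$. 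The argument is a two-dimensional analogue of the classical Levenshtein equivalence between deletion- and insertion-correcting codes.

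For the direction \emph{common insertion implies common deletion}, I would suppose $\boldsymbol{Y} \in \mathbb{I}_{t,t}(\boldsymbol{X}) \cap \mathbb{I}_{t,t}(\boldsymbol{Z})$, so that there exist row-index sets $R_X, R_Z$ and column-index sets $C_X, C_Z$ in $[1:n+t]$, each of cardinality $t$, such that $\boldsymbol{X}$ (respectively $\boldsymbol{Z}$) is recovered from $\boldsymbol{Y}$ by deleting the rows in $R_X$ (resp.\ $R_Z$) and columns in $C_X$ (resp.\ $C_Z$). Form $\boldsymbol{W}^{*}$ by deleting from $\boldsymbol{Y}$ all rows in $R_X \cup R_Z$ and all columns in $C_X \cup C_Z$. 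Because row- and column-deletions commute, and because $|R_Z \setminus R_X| = |R_X \setminus R_Z| =: a \leq t$ and $|C_Z \setminus C_X| = |C_X \setminus C_Z| =: b \leq t$, we obtain $\boldsymbol{W}^{*} \in \mathbb{D}_{a,b}(\boldsymbol{X}) \cap \mathbb{D}_{a,b}(\boldsymbol{Z})$. Arbitrarily deleting a further $t-a$ rows and $t-b$ columns of $\boldsymbol{W}^{*}$ yields a common element of $\mathbb{D}_{t,t}(\boldsymbol{X}) \cap \mathbb{D}_{t,t}(\boldsymbol{Z})$.

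For the reverse direction, given $\boldsymbol{W} \in \mathbb{D}_{t,t}(\boldsymbol{X}) \cap \mathbb{D}_{t,t}(\boldsymbol{Z})$, write $W_{i,j} = X_{\pi_X(i),\sigma_X(j)} = Z_{\pi_Z(i),\sigma_Z(j)}$ for order-preserving injections $\pi_X,\pi_Z,\sigma_X,\sigma_Z:[1:n-t]\to[1:n]$. I would construct order-preserving injections $\mu_X,\mu_Z,\nu_X,\nu_Z:[1:n]\to[1:n+t]$ that \emph{interleave} the $t$ extra rows of $\boldsymbol{X}$ with those of $\boldsymbol{Z}$ above the common $(n-t)$-row spine, and analogously for columns, in such a way that the images of $\mu_X$ and $\mu_Z$ together cover $[1:n+t]$ and intersect exactly in $\mathrm{im}(\mu_X \circ \pi_X) = \mathrm{im}(\mu_Z \circ \pi_Z)$ (and similarly for $\nu_X,\nu_Z$). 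Define $\boldsymbol{Y} \in \Sigma_q^{(n+t)\times(n+t)}$ by $Y_{\mu_X(i),\nu_X(j)} = X_{i,j}$ and $Y_{\mu_Z(i),\nu_Z(j)} = Z_{i,j}$, and set the remaining ``cross'' entries (positions at an $\boldsymbol{X}$-exclusive row together with a $\boldsymbol{Z}$-exclusive column, or vice versa) to any fixed symbol. Then $\boldsymbol{X}, \boldsymbol{Z} \in \mathbb{D}_{t,t}(\boldsymbol{Y})$ by construction, so $\boldsymbol{Y} \in \mathbb{I}_{t,t}(\boldsymbol{X}) \cap \mathbb{I}_{t,t}(\boldsymbol{Z})$.

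The main obstacle will be verifying that the two prescriptions defining $\boldsymbol{Y}$ are mutually consistent. Whenever $\mu_X(i) = \mu_Z(i')$ and $\nu_X(j) = \nu_Z(j')$ simultaneously hold, the interleaving forces these indices into the common spines, so $i = \pi_X(k)$, $i' = \pi_Z(k)$, $j = \sigma_X(l)$, $j' = \sigma_Z(l)$ for some $k, l \in [1:n-t]$, and both prescriptions then assign the value $W_{k,l}$. One further checks that the cross positions genuinely receive no constraint from either $\boldsymbol{X}$ or $\boldsymbol{Z}$, since they lie outside $\mathrm{im}(\nu_X)$ or outside $\mathrm{im}(\mu_Z)$ (respectively outside $\mathrm{im}(\nu_Z)$ or outside $\mathrm{im}(\mu_X)$). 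Once these consistency checks are dispatched, existence of the desired interleaving reduces to a routine counting identity ($(n-t) + t + t = n+t$), completing the proof.
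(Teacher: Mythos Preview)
The paper does not prove this lemma at all; it is simply quoted as \cite[Corollary 2]{Bitar-21-IT}. Your proposal therefore cannot be compared against a proof ``in the paper.'' That said, your argument is a correct self-contained proof of the statement, and it is worth noting how it differs from the technique the paper \emph{does} use nearby.

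For the closely related Lemma~\ref{lem:burst-equiv} (the burst version), the paper argues by repeatedly invoking the one-dimensional equivalence (Lemma~\ref{lem:burst}) as a black box: starting from a common deletion descendant, it swaps one layer of column deletions for column insertions, commutes the order of row and column operations, and then swaps the row deletions for row insertions. Your approach is instead a direct two-dimensional generalisation of Levenshtein's original combinatorial argument: you explicitly build the common super-array $\boldsymbol{Y}$ by interleaving the extra rows and columns of $\boldsymbol{X}$ and $\boldsymbol{Z}$ around the shared spine $\boldsymbol{W}$, and handle the ``cross'' entries by observing they are unconstrained. Your route is more constructive and avoids appealing to any 1D result, at the cost of the consistency check you correctly flag as the main technical step (that simultaneous membership $\mu_X(i)=\mu_Z(i')$, $\nu_X(j)=\nu_Z(j')$ forces both indices into the spine). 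The paper's style of argument, by contrast, is shorter but relies on the 1D equivalence being already in hand; it also generalises more readily to asymmetric $(t_r,t_c)$ and to mixed insdel statements like Lemma~\ref{lem:equiv}, whereas your interleaving construction uses $t_r=t_c=t$ in the counting identity $(n-t)+t+t=n+t$ to make the super-array exactly $(n+t)\times(n+t)$.
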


Subsequently, Stylianou et al. \cite{Stylianou-22-ISIT} extended this equivalence to a broader framework, which we formalize below.

\begin{lemma}\cite[Lemma 3]{Stylianou-22-ISIT}\label{lem:equiv}
  A code $\mathcal{C}\subseteq \Sigma_q^{n \times n}$ can correct $t_r$-row deletions and $t_c$-column deletions if and only if it can correct $t_r$-row insdels and $t_c$-column insdels, referred to as \emph{$(t_r,t_c)$-criss-cross insdels}. Here, the term ``$t$ insdels'' denotes any combination of $t^{ins}$ insertions and $t^{del}$ deletions satisfying $t^{ins}+t^{del}=t$.
\end{lemma}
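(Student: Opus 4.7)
The plan is to mirror Levenshtein's classical one-dimensional equivalence between deletion and insdel correction, with the row and column axes handled in parallel. The backward direction is immediate, since a $(t_r,t_c)$-criss-cross deletion is a $(t_r,t_c)$-criss-cross insdel with zero insertions on each axis, so I focus on the forward implication: assuming $\mathcal{C}$ corrects $(t_r,t_c)$-criss-cross deletions, it must also correct $(t_r,t_c)$-criss-cross insdels. I would argue by contradiction. Suppose distinct $\boldsymbol{X},\boldsymbol{Z} \in \mathcal{C}$ share a common insdel image $\boldsymbol{Y}$, obtained from $\boldsymbol{X}$ by $t_r^{del,1}$ row deletions, $t_r^{ins,1}$ row insertions, $t_c^{del,1}$ column deletions, and $t_c^{ins,1}$ column insertions (with $t_r^{del,1}+t_r^{ins,1}=t_r$ and $t_c^{del,1}+t_c^{ins,1}=t_c$), and from $\boldsymbol{Z}$ by analogous operations with superscript $2$. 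The target is to exhibit a single array in $\mathbb{D}_{t_r,t_c}(\boldsymbol{X}) \cap \mathbb{D}_{t_r,t_c}(\boldsymbol{Z})$, which is forbidden by the deletion-correcting hypothesis.

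The first step is a normalization. Row operations and column operations act on disjoint axes and commute up to a harmless re-indexing of inserted content, and within a single axis the classical one-dimensional reordering converts any interleaving of deletions and insertions into all deletions first, then all insertions, while preserving the final array. Applying this to each path yields intermediate arrays $\boldsymbol{V}_1,\boldsymbol{V}_2$ such that $\boldsymbol{V}_i \in \mathbb{D}_{t_r^{del,i},t_c^{del,i}}(\boldsymbol{X}_i)$ (where $\boldsymbol{X}_1:=\boldsymbol{X}$ and $\boldsymbol{X}_2:=\boldsymbol{Z}$) and $\boldsymbol{Y}$ is obtained from $\boldsymbol{V}_i$ purely by inserting $t_r^{ins,i}$ rows and $t_c^{ins,i}$ columns. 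Writing $R_i$ and $C_i$ for the row- and column-index sets in $\boldsymbol{Y}$ that survived from the original codeword under path $i$, this amounts to $\boldsymbol{V}_i = \boldsymbol{Y}_{R_i,C_i}$ with $|R_i|=n-t_r^{del,i}$ and $|C_i|=n-t_c^{del,i}$.

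The second step intersects the two projections to build $\boldsymbol{U} := \boldsymbol{Y}_{R_1 \cap R_2,\, C_1 \cap C_2}$. Because both paths terminate at $\boldsymbol{Y}$ with the same dimensions, the row count forces $t_r^{del,1}-t_r^{ins,1}=t_r^{del,2}-t_r^{ins,2}$, and together with $t_r^{del,i}+t_r^{ins,i}=t_r$ this pins down $t_r^{del,1}=t_r^{del,2}$; the analogue holds for columns. Inclusion-exclusion inside the row index set of $\boldsymbol{Y}$ then gives $|R_1 \cap R_2| \geq |R_1|+|R_2|-(n+t_r-2t_r^{del,1}) = n-t_r$, so $\boldsymbol{U}$ is reachable from both $\boldsymbol{X}$ and $\boldsymbol{Z}$ by at most $t_r$ row deletions, and symmetrically by at most $t_c$ column deletions. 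Padding $\boldsymbol{U}$ with arbitrary additional row and column deletions as needed lifts it to some $\boldsymbol{U}' \in \mathbb{D}_{t_r,t_c}(\boldsymbol{X}) \cap \mathbb{D}_{t_r,t_c}(\boldsymbol{Z})$, the desired contradiction.

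The hard part, I expect, is the normalization of the first step, namely rigorously showing that any mixed criss-cross insdel sequence can be reshuffled into an all-deletions stage followed by an all-insertions stage while keeping the final $\boldsymbol{Y}$ fixed, with the contents of inserted rows and columns adjusted consistently whenever a row operation and a column operation are swapped. Once that factoring is in hand, the identity $t_r^{del,1}=t_r^{del,2}$ (forced by the common dimensions of $\boldsymbol{Y}$) kills the apparent cross-term $t_r^{del,1}+t_r^{ins,2}$, and the inclusion-exclusion bound $|R_1 \cap R_2| \geq n-t_r$ together with its column analogue delivers the contradiction with only routine bookkeeping.
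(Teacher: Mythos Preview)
The paper does not supply its own proof of this lemma; it is quoted from \cite{Stylianou-22-ISIT} without argument. So there is no in-paper proof to compare against.

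Your proposal is correct. The index-set intersection argument is the natural two-dimensional lift of Levenshtein's one-dimensional equivalence: once you factor each insdel path as deletions-then-insertions, the surviving row and column indices $R_i,C_i$ inside $\boldsymbol{Y}$ are well defined, the dimension constraint forces $t_r^{del,1}=t_r^{del,2}$ and $t_c^{del,1}=t_c^{del,2}$, and inclusion--exclusion gives $|R_1\cap R_2|\ge n-t_r$, $|C_1\cap C_2|\ge n-t_c$. Restricting $\boldsymbol{Y}$ to $R_1\cap R_2$ and $C_1\cap C_2$ and then trimming to exact size produces a common element of $\mathbb{D}_{t_r,t_c}(\boldsymbol{X})\cap\mathbb{D}_{t_r,t_c}(\boldsymbol{Z})$, contradicting the deletion-correcting hypothesis. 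The normalization step you flag as the ``hard part'' is in fact routine: within a single axis it is the classical reordering fact, and swapping a row operation with a column operation only requires extending or truncating the inserted vector by one entry, which you already note.

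For context, the paper does prove the burst analogue (Lemma~\ref{lem:burst-equiv}) by a different mechanism: it invokes the one-dimensional equivalence (Lemma~\ref{lem:burst}) axis by axis, converting column deletions to column insertions, swapping axes, then converting row deletions to row insertions. That iterative black-box method would also prove the present lemma if one substituted the general one-dimensional insdel equivalence for the burst version. Your direct index-intersection argument avoids the back-and-forth and gives a self-contained proof; the paper's style of argument trades that for modularity, reducing everything to the one-dimensional statement.
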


Building upon the well-established foundation of one-dimensional burst-error correction theory \cite{Schoeny-17-IT-BD}, we formalize the foundational equivalence between two-dimensional burst-deletion correction capabilities and their corresponding burst-insertion counterparts. For clarity of exposition, we first recall critical results from the one-dimensional setting:

\begin{lemma}\cite[Theorem 1]{Schoeny-17-IT-BD}\label{lem:burst}
  A code $\mathcal{C}\subseteq \Sigma_q^{n}$ can correct $t$ consecutive deletions if and only if it can correct $t$ consecutive insertions.
\end{lemma}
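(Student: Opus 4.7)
The plan is to prove both directions simultaneously through a common structural characterization, showing that the existence of a common burst-deletion and the existence of a common burst-insertion both reduce to a length-$t$ ``burst-swap'' relation between $\boldsymbol{x}$ and $\boldsymbol{y}$.

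First I would analyze the deletion side. Suppose there exist distinct $\boldsymbol{x}, \boldsymbol{y} \in \mathcal{C}$ and a common $\boldsymbol{z} \in \Sigma_q^{n-t}$ obtained by deleting length-$t$ bursts at positions $i$ and $j$ of $\boldsymbol{x}$ and $\boldsymbol{y}$ respectively, with $i \leq j$ without loss of generality. A coordinate-by-coordinate comparison of $\boldsymbol{z}$ against each codeword forces the decomposition
\[
\boldsymbol{x} = (\boldsymbol{a}, \boldsymbol{u}, \boldsymbol{v}, \boldsymbol{b}), \qquad \boldsymbol{y} = (\boldsymbol{a}, \boldsymbol{v}, \boldsymbol{w}, \boldsymbol{b}),
\]
where $\boldsymbol{u} = \boldsymbol{x}_{[i:i+t-1]}$ and $\boldsymbol{w} = \boldsymbol{y}_{[j:j+t-1]}$ are the two deleted bursts, $\boldsymbol{v} = \boldsymbol{x}_{[i+t:j+t-1]} = \boldsymbol{y}_{[i:j-1]}$ is the coincident middle block of length $j - i$ (possibly empty), and $\boldsymbol{a}$, $\boldsymbol{b}$ are the common prefix and suffix.

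Next I would exploit this decomposition on the insertion side. Define
\[
\boldsymbol{z}' := (\boldsymbol{a}, \boldsymbol{u}, \boldsymbol{v}, \boldsymbol{w}, \boldsymbol{b}) \in \Sigma_q^{n+t}.
\]
Then $\boldsymbol{z}'$ is obtained from $\boldsymbol{x}$ by inserting the length-$t$ burst $\boldsymbol{w}$ between $\boldsymbol{v}$ and $\boldsymbol{b}$, and from $\boldsymbol{y}$ by inserting $\boldsymbol{u}$ between $\boldsymbol{a}$ and $\boldsymbol{v}$; hence $\boldsymbol{z}'$ is a common burst-insertion of $\boldsymbol{x}$ and $\boldsymbol{y}$. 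Contrapositively, a code that corrects $t$ consecutive insertions must also correct $t$ consecutive deletions. The converse direction is completely symmetric: given a common burst-insertion $\boldsymbol{z}' \in \Sigma_q^{n+t}$ of $\boldsymbol{x}$ and $\boldsymbol{y}$ at positions $i \leq j$, an analogous alignment produces the same five-block structure $\boldsymbol{z}' = (\boldsymbol{a}, \boldsymbol{u}, \boldsymbol{v}, \boldsymbol{w}, \boldsymbol{b})$, from which $(\boldsymbol{a}, \boldsymbol{v}, \boldsymbol{b}) \in \Sigma_q^{n-t}$ is a common burst-deletion of $\boldsymbol{x}$ and $\boldsymbol{y}$.

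The main technical care is the alignment step on the insertion side when $j - i < t$, where the two inserted bursts overlap inside $\boldsymbol{z}'$; a routine case split confirms that the five-block decomposition still holds, with $\boldsymbol{v}$ of length $j - i < t$ and the overlapping portions of the two inserted bursts forced to coincide symbol-by-symbol. Beyond this, the argument is pure bookkeeping — the contiguity of burst errors yields a clean block structure that makes the equivalence far more transparent than in the general (non-burst) deletion/insertion setting.
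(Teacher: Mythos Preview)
The paper does not give its own proof of this lemma; it is quoted verbatim from \cite{Schoeny-17-IT-BD} and used only as a black box in the proof of Lemma~\ref{lem:burst-equiv}. Your argument is the standard one and is essentially correct.

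One small imprecision worth fixing: in the overlapping case $j-i<t$ on the insertion side, the literal five-block decomposition $\boldsymbol{z}'=(\boldsymbol{a},\boldsymbol{u},\boldsymbol{v},\boldsymbol{w},\boldsymbol{b})$ with \emph{disjoint} blocks $\boldsymbol{u},\boldsymbol{w}$ of length $t$ cannot hold, since the two inserted bursts share $t-(j-i)$ symbols inside $\boldsymbol{z}'$ and the middle of $\boldsymbol{z}'$ between $\boldsymbol{a}$ and $\boldsymbol{b}$ has length only $(j-i)+t<2t$. What the alignment actually gives in that case is $\boldsymbol{x}=(\boldsymbol{a},\boldsymbol{p},\boldsymbol{b})$ and $\boldsymbol{y}=(\boldsymbol{a},\boldsymbol{p}',\boldsymbol{b})$ with $|\boldsymbol{p}|=|\boldsymbol{p}'|=j-i<t$; then any length-$t$ burst in $\boldsymbol{x}$ (respectively $\boldsymbol{y}$) covering the short middle block removes it together with identical symbols from $\boldsymbol{a}$ or $\boldsymbol{b}$, producing the same word from both. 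With this adjustment the proof goes through.
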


\begin{lemma}\label{lem:burst-equiv}
  A code $\mathcal{C}\subseteq \Sigma_q^{n\times n}$ is a $(t_r,t_c)$-criss-cross burst-deletion correcting code if and only if it is a $(t_r,t_c)$-criss-cross burst-insertion correcting code.
\end{lemma}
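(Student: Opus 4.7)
The plan is to reduce the two-dimensional statement to two applications of the one-dimensional burst-equivalence (Lemma~\ref{lem:burst}), exploiting the fact that row-wise and column-wise operations commute since they act on orthogonal dimensions. As a preliminary step, I would promote Lemma~\ref{lem:burst} from a code-level statement to a ball-level one: for any distinct equal-length sequences $\boldsymbol{x} \neq \boldsymbol{z}$, $\mathbb{D}^{\mathrm{burst}}_t(\boldsymbol{x}) \cap \mathbb{D}^{\mathrm{burst}}_t(\boldsymbol{z}) \neq \emptyset$ if and only if $\mathbb{I}^{\mathrm{burst}}_t(\boldsymbol{x}) \cap \mathbb{I}^{\mathrm{burst}}_t(\boldsymbol{z}) \neq \emptyset$. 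This follows by specializing Lemma~\ref{lem:burst} to the two-element code $\{\boldsymbol{x}, \boldsymbol{z}\}$, and since that lemma holds over any finite alphabet, it may be invoked with alphabet $\Sigma_q^m$ for arbitrary $m$, treating an entire row or column of an array as a single symbol.

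For the forward direction, I would assume $\mathcal{C}$ is a $(t_r, t_c)$-criss-cross burst-deletion correcting code and, for contradiction, that distinct $\boldsymbol{X}, \boldsymbol{Z} \in \mathcal{C}$ share a common criss-cross burst-insertion $\boldsymbol{Y}$ of size $(n+t_r) \times (n+t_c)$. By commutativity, decompose each insertion into a column-burst-insertion producing intermediate $n \times (n+t_c)$ arrays $\boldsymbol{X}_c$ and $\boldsymbol{Z}_c$, followed by a row-burst-insertion to $\boldsymbol{Y}$. If $\boldsymbol{X}_c = \boldsymbol{Z}_c$, the common intermediate witnesses overlapping column-burst-insertion balls of $\boldsymbol{X}$ and $\boldsymbol{Z}$; the ball-level 1D equivalence applied to columns then yields a common column-burst-deletion of $\boldsymbol{X}$ and $\boldsymbol{Z}$, which, followed by any row-burst-deletion, produces a common criss-cross burst-deletion, contradicting the assumed property of $\mathcal{C}$. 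Otherwise $\boldsymbol{X}_c \neq \boldsymbol{Z}_c$, and $\boldsymbol{Y}$ witnesses overlapping row-burst-insertion balls of $\boldsymbol{X}_c$ and $\boldsymbol{Z}_c$, so the ball-level 1D equivalence produces a common row-burst-deletion $\boldsymbol{W}_r$ of size $(n-t_r) \times (n+t_c)$. Commutativity re-expresses $\boldsymbol{W}_r$ as a column-burst-insertion of row-burst-deletions $\boldsymbol{X}_r, \boldsymbol{Z}_r$ of $\boldsymbol{X}, \boldsymbol{Z}$; a second ball-level application of the 1D equivalence, now to columns, yields a common column-burst-deletion $\boldsymbol{W}$ of size $(n-t_r) \times (n-t_c)$, which unravels to a common $(t_r, t_c)$-criss-cross burst-deletion of $\boldsymbol{X}$ and $\boldsymbol{Z}$, again a contradiction. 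The reverse direction follows by the entirely symmetric argument, exchanging the roles of insertion and deletion throughout.

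The main obstacle is the careful bookkeeping required to justify each commutative re-expression and to handle the degenerate subcases (such as $\boldsymbol{X}_c = \boldsymbol{Z}_c$ above, and the analogous $\boldsymbol{X}_r = \boldsymbol{Z}_r$ at the inner layer) where the 1D equivalence does not apply non-trivially and must be short-circuited by performing an arbitrary burst-deletion in the remaining dimension on the shared intermediate, so that the reduction chain does not stall before producing the desired contradiction.
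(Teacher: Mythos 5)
Your proposal is correct and follows essentially the same route as the paper's proof: both reduce the claim to two applications of the one-dimensional burst equivalence (Lemma~\ref{lem:burst}), treating whole rows or columns as symbols over the enlarged alphabet, and both use the commutativity of row and column operations to re-order the error steps; you merely start the contrapositive chain from the insertion side while the paper starts from the deletion side. Your explicit treatment of the degenerate cases (e.g.\ $\boldsymbol{X}_c=\boldsymbol{Z}_c$) is sound but not strictly needed, since when the intermediates coincide the relevant balls trivially intersect and the paper's chain goes through unchanged.
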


\begin{IEEEproof}
    We prove it by contradiction through contrapositive analysis.
    Suppose $\mathcal{C}$ fails to be a $(t_r,t_c)$-criss-cross burst-deletion correcting code. Then there exist distinct codewords $\boldsymbol{X}, \boldsymbol{Y}\in \mathcal{C}$ with overlapping $(t_r,t_c)$-criss-cross burst-deletion balls, i.e., some array $\boldsymbol{Z}$ satisfies
    \begin{gather*}
      \boldsymbol{X}\xrightarrow{t_r \text{ consecutive row deletions}} \boldsymbol{X}' \xrightarrow{t_c \text{ consecutive column deletions}} \boldsymbol{Z},\\
      \boldsymbol{Y}\xrightarrow{t_r \text{ consecutive row deletions}} \boldsymbol{Y}' \xrightarrow{t_c \text{ consecutive column deletions}} \boldsymbol{Z}.
    \end{gather*}
    This yields $\mathbb{D}_{0,t_c}(\boldsymbol{X}')\cap \mathbb{D}_{0,t_c}(\boldsymbol{Y}')\neq \emptyset$. By Lemma \ref{lem:burst}, there exists $\boldsymbol{Z}'\in \mathbb{I}_{0,t_c}(\boldsymbol{X}')\cap \mathbb{I}_{0,t_c}(\boldsymbol{Y}')$ such that:
    \begin{gather*}
      \boldsymbol{X}\xrightarrow{t_r \text{ consecutive row deletions}} \boldsymbol{X}' \xrightarrow{t_c \text{ consecutive column insertions}} \boldsymbol{Z}',\\
      \boldsymbol{Y}\xrightarrow{t_r \text{ consecutive row deletions}} \boldsymbol{Y}' \xrightarrow{t_c \text{ consecutive column insertions}} \boldsymbol{Z}'.
    \end{gather*}
    By changing the order of row errors and column errors, we obtain intermediate arrays $\boldsymbol{X}'', \boldsymbol{Y}''$ satisfying
    \begin{gather*}
      \boldsymbol{X}\xrightarrow{t_c \text{ consecutive column insertions}} \boldsymbol{X}'' \xrightarrow{t_r \text{ consecutive row deletions}} \boldsymbol{Z}',\\
      \boldsymbol{Y}\xrightarrow{t_c \text{ consecutive column insertions}} \boldsymbol{Y}'' \xrightarrow{t_r \text{ consecutive row deletions}} \boldsymbol{Z}'.
    \end{gather*}
    This implies $\mathbb{D}_{t_r,0}(\boldsymbol{X}'')\cap \mathbb{D}_{t_r,0}(\boldsymbol{Y}'')\neq \emptyset$. Applying Lemma \ref{lem:burst} again, we find $\boldsymbol{Z}''\in \mathbb{I}_{t_r,0}(\boldsymbol{X}'')\cap \mathbb{I}_{t_r,0}(\boldsymbol{Y}'')$ such that
    \begin{gather*}
      \boldsymbol{X}\xrightarrow{t_c \text{ consecutive column insertions}} \boldsymbol{X}'' \xrightarrow{t_r \text{ consecutive row insertions}} \boldsymbol{Z}'',\\
      \boldsymbol{Y}\xrightarrow{t_c \text{ consecutive column insertions}} \boldsymbol{Y}'' \xrightarrow{t_r \text{ consecutive row insertions}} \boldsymbol{Z}''.
    \end{gather*}
    Thus, $\mathcal{C}$ cannot be a $(t_r,t_c)$-criss-cross burst-insertion correcting code.
    The converse follows symmetrically, thereby completing the proof.
\end{IEEEproof}

\begin{remark}
  A burst of $t$ insdels refers to either a burst of $t$ insertions or a burst of $t$ deletions.
  We say that $\boldsymbol{X}\in \Sigma_q^{n\times n}$ experiences a \emph{$(t_r,t_c)$-criss-cross burst-insdel} if it suffers a burst of $t_r$-row insdels and a burst of $t_c$-column insdels.
  One can follow an analogous discussion found in Lemma \ref{lem:burst-equiv} to demonstrate that a code $\mathcal{C}\subseteq \Sigma_q^{n\times n}$ can correct a $(t_r,t_c)$-criss-cross burst-deletion if and only if it can correct a $(t_r,t_c)$-criss-cross burst-insdel.
\end{remark}

\begin{comment}
Following a similar discussion, we can extend the above equivalence to a broader framework, which we present as follows.
\begin{corollary}
  A code $\mathcal{C}\subseteq \Sigma_q^{n\times n}$ can correct a burst of $t_r$-row deletions and a burst of $t_c$-column deletions if and only if it can correct a burst of $t_r$-row insdels and a burst of $t_c$-column insdels, referred to as \emph{$(t_r,t_c)$-criss-cross burst-insdel}. Here, the term ``a burst of $t$ insdels'' denotes either a burst of $t$ insertions or a burst $t$ deletions.
\end{corollary}
\end{comment}

\subsection{Useful Tools}

We now review a classical construction of one-dimensional non-binary single deletion correcting codes, which will be crucial for our subsequent developments of two-dimensional criss-cross deletion correcting codes.

\begin{definition}
    For any sequence $\boldsymbol{x} = (x_1,x_2,\ldots,x_n)\in \Sigma_q^n$, its \emph{signature} is defined as the binary vector $\alpha(\boldsymbol{x})\in \Sigma_2^{n-1}$, where its $i$-th entry
    \begin{equation*}
        \alpha(x)_i=
        \begin{cases}
            1, &\mbox{if }x_{i+1} \geq x_i;\\
            0, &\mbox{otherwise},
        \end{cases}
    \end{equation*}
    for $i \in [1:n - 1]$. 
\end{definition}

\begin{lemma}\cite[Theorem 1]{Tenengolts-84-IT-q_D}
    For any $a \in \Sigma_n$ and $b \in \Sigma_q$, the code 
    \begin{align*}
        \mathrm{VT}_{a,b}(n,q) = \left\{\boldsymbol{x} \in \Sigma_q^n : \sum_{i=1}^{n-1} i\cdot \alpha(x)_i \equiv a \pmod{n},
        ~ \sum_{i=1}^n x_i \equiv b \pmod{q} \right\}
    \end{align*} 
    forms a $q$-ary single deletion correcting code of length $n$.
\end{lemma}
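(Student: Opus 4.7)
The plan is to construct an explicit decoder and verify, by analysing both modular constraints, that any two codewords sharing a length-$(n-1)$ descendant must coincide. Concretely, given a received $\boldsymbol{y} \in \Sigma_q^{n-1}$ obtained from an unknown $\boldsymbol{x} \in \mathrm{VT}_{a,b}(n,q)$ by a single deletion, the symbol-sum constraint will identify the \emph{value} of the deleted entry, and then the signature-moment constraint will identify its \emph{position}.

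First I would recover the value. Since $\boldsymbol{x}$ is obtained from $\boldsymbol{y}$ by inserting a single symbol $v \in \Sigma_q$, the congruence $\sum_{i=1}^{n} x_i \equiv b \pmod{q}$ forces $v \equiv b - \sum_{j=1}^{n-1} y_j \pmod{q}$; as $v \in [0:q-1]$, this determines $v$ uniquely and tells the decoder exactly which symbol to reinsert.

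Next I would recover the insertion position. Inserting $v$ at any interior position of a maximal run of $v$'s in $\boldsymbol{y}$ yields the same reconstructed array, so it suffices to index candidates by the distinct runs. For a candidate insertion position $i$, the signature $\alpha(\boldsymbol{x})$ is obtained from $\alpha(\boldsymbol{y})$ by splitting the comparison bit at index $i-1$ (comparing $y_{i-1}$ and $y_i$) into two bits (comparing $y_{i-1}$ with $v$, and $v$ with $y_i$), and by shifting the tail of $\alpha(\boldsymbol{y})$ by one index. Substituting this relation into $\sum_{k=1}^{n-1} k\,\alpha(x)_k$ expresses the signature-moment as a prefix sum of $\alpha(\boldsymbol{y})$, a local correction depending on $v$, $y_{i-1}$ and $y_i$, and a shifted suffix.

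The main obstacle, which is the technical heart of Tenengolts's original argument, is to show that two distinct candidate runs cannot both satisfy the congruence $\sum_{k=1}^{n-1} k\,\alpha(x)_k \equiv a \pmod{n}$. I would organise the verification by cases according to whether $v$ is strictly above, strictly below, or equal to each of its would-be neighbours, because these cases fix the values of the two new signature bits and hence the local correction term. In each case the difference of the signature-moments associated to two distinct candidate runs telescopes into an integer whose absolute value is at least $1$ and at most $n-1$, and therefore cannot vanish modulo $n$. This rules out any positional ambiguity and completes the proof.
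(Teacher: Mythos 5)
The paper does not actually prove this lemma: it is quoted from Tenengolts (reference \cite{Tenengolts-84-IT-q_D}), and what follows it in the paper is only an unproved sketch of the decoder (symbol recovery from the sum mod $q$, then signature recovery, then position recovery), culminating in Corollary~\ref{cor:VTdecoder}. Your plan is sound and follows the same overall logic as that sketch, but takes a somewhat more direct route in the second stage: the paper's sketch first recovers the whole signature $\alpha(\boldsymbol{x})$ by exploiting the fact that deleting a symbol from $\boldsymbol{x}$ induces a \emph{single bit deletion} in $\alpha(\boldsymbol{x})$ (so the moment constraint acts as a binary VT code on signatures), and only then searches for the unique insertion position consistent with the recovered signature; you instead skip signature recovery and compare the signature-moments of the candidate insertions of the known symbol $v$ directly, arguing the difference between two run-distinct candidates is a nonzero integer of magnitude less than $n$. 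Your version avoids having to prove the ``merged comparison bit equals one of the two replaced bits'' fact, at the cost of putting all the work into the telescoping/case analysis, which is exactly the technical content of Tenengolts's theorem and which your proposal names but does not carry out: to make it a complete proof you would need to execute the case analysis (including the boundary insertions at the two ends, where the ``split one bit into two'' description does not literally apply) and verify in each case both the nonvanishing and the bound $|\Delta|\le n-1$. Since the paper itself discharges this by citation, your blind outline is an acceptable and essentially standard reconstruction of the cited proof rather than a divergence from the paper.
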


This code family $\mathrm{VT}_{a,b}(n,q)$ commonly referred to as the \emph{non-binary Varshamov-Tenengolts code} (abbreviated as \emph{non-binary VT code}), admits an efficient decoding algorithm, which is outlined as follows:
Let $\boldsymbol{x}\in \mathrm{VT}_{a,b}(n,q)$ experience a single deletion at position $d$, resulting in the subsequence $\boldsymbol{y}$.
\begin{itemize}
    \item \emph{Symbol Recovery:} Compute $\Delta:= \left(d-\sum_{i=1}^{n-1} y_i\right) \pmod{q}$. The missing symbol is uniquely determined as $x_d= \Delta$.

    \item \emph{Signature Recovery:} Let $\alpha(\boldsymbol{y})\in \Sigma_2^{n-2}$ denote the signature of $\boldsymbol{y}$.  The original signature $\alpha(\boldsymbol{x})$ can be uniquely reconstructed by using $\alpha(\boldsymbol{y})$ and the condition $\sum_{i=1}^{n-1} i\cdot \alpha(x)_i \equiv a \pmod{n}$.

    \item \emph{Sequence Recovery:} Compare $\alpha(\boldsymbol{y})$ with $\alpha(\boldsymbol{x})$.
    A `unique' position $d'\in [1:n]$ is identified where inserting $x_d$ into $\boldsymbol{y}$ restores consistency with $\alpha(\boldsymbol{x})$.
\end{itemize}
This process guarantees unique reconstruction of $\boldsymbol{x}$, leading to:

\begin{corollary}\label{cor:VTdecoder}
    For any $a \in \Sigma_n$, the code
    \begin{align*}
        \left\{\boldsymbol{x} \in \Sigma_q^n : \sum_{i=1}^{n-1} i\cdot \alpha(x)_i \equiv a \pmod{n}\right\}
    \end{align*}
    can uniquely correct any single deletion when the deleted symbol is known a priori.
\end{corollary}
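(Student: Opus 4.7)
The plan is to observe that the three-step VT decoder outlined just before the corollary cleanly separates the roles of its two defining congruences: the sum constraint $\sum_{i=1}^n x_i \equiv b \pmod q$ is consulted only in the Symbol Recovery step to identify the value of the deleted entry, whereas both Signature Recovery and Sequence Recovery rely solely on the signature congruence $\sum_{i=1}^{n-1} i \cdot \alpha(x)_i \equiv a \pmod n$ together with $\alpha(\boldsymbol{y})$ and the deleted symbol itself. When the corollary's hypothesis supplies the deleted symbol $x_d$ a priori, Symbol Recovery is vacuous, and the sum-modulo-$q$ congruence becomes unnecessary.

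Concretely, I would execute the following. First, reduce to the binary setting by computing $\alpha(\boldsymbol{y}) \in \Sigma_2^{n-2}$. A standard case analysis—on whether the neighborhood $x_{d-1}, x_d, x_{d+1}$ forms an ascent, a descent, or an equality, together with the boundary cases $d = 1$ and $d = n$—verifies that $\alpha(\boldsymbol{y})$ is obtained from $\alpha(\boldsymbol{x}) \in \Sigma_2^{n-1}$ by deleting exactly one bit. Second, apply the classical binary VT decoder with modulus $n$ to $\alpha(\boldsymbol{y})$ under the constraint $\sum_{i=1}^{n-1} i \cdot \alpha(x)_i \equiv a \pmod n$; since this defines a binary single-deletion correcting code of length $n-1$, it uniquely recovers $\alpha(\boldsymbol{x})$. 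Third, compare $\alpha(\boldsymbol{x})$ with $\alpha(\boldsymbol{y})$ to localize the run of $\boldsymbol{y}$ into which $x_d$ must be inserted, and perform that insertion to reconstruct $\boldsymbol{x}$; this is unambiguous because any two insertion positions strictly inside a constant run produce identical sequences, and the known value of $x_d$ pins down the correct run.

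The main obstacle is the single-bit-deletion relationship between $\alpha(\boldsymbol{x})$ and $\alpha(\boldsymbol{y})$, which requires the mildly tedious case split above but is precisely the classical lemma underpinning the Tenengolts decoder that the excerpt already takes for granted. Once this is in hand, the corollary is essentially a restatement of the exhibited decoder with its first step excised, so no further machinery is needed and the uniqueness claim follows directly.
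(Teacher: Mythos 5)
Your proposal is correct and follows essentially the same route as the paper: the paper's justification of this corollary is precisely the three-step Tenengolts decoder sketched just above it, with the observation that only the Symbol Recovery step uses the mod-$q$ sum, so supplying the deleted symbol a priori lets the signature congruence alone drive Signature Recovery and Sequence Recovery. Your added details (the single-bit-deletion relation between $\alpha(\boldsymbol{x})$ and $\alpha(\boldsymbol{y})$, and the run-based uniqueness of the re-insertion) are the same facts the paper takes for granted, stated at essentially the same level of rigor.
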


A code $\mathcal{C} \subseteq \Sigma_q^n$ is called a \emph{$P$-bounded single deletion correcting code} if it can correct a single deletion given the additional knowledge of an interval of length $P$ containing the erroneous coordinate. 

\begin{lemma}\cite[Lemma 1]{Schoeny-17-SVT}
  For any $a \in \Sigma_{P+1}$, $b \in \Sigma_q$, and $c\in \Sigma_2$, the code 
    \begin{align*}
        \mathrm{SVT}_{a,b,c}(n,q,P) = \left\{\boldsymbol{x} \in \Sigma_q^n : \sum_{i=1}^{n-1} i\cdot \alpha(x)_i \equiv a \pmod{P+1},
        ~ \sum_{i=1}^n x_i \equiv b \pmod{q},~ \sum_{i=1}^n \alpha(x)_i \equiv c \pmod{2}\right\}
    \end{align*} 
    forms a $P$-bounded single deletion correcting code.
\end{lemma}

This code family $\mathrm{SVT}_{a,b,c}(n,q,P)$ commonly referred to as the \emph{non-binary $P$-bounded VT code}, also admits an efficient decoding algorithm.
Similar to Corollary \ref{cor:VTdecoder}, we can derive the following:
\begin{corollary}\label{cor:sVTdecoder}
    For any $a \in \Sigma_{P+1}$ and $c\in \Sigma_2$, the code
    \begin{align*}
        \left\{\boldsymbol{x} \in \Sigma_q^n : \sum_{i=1}^{n-1} i\cdot \alpha(x)_i \equiv a \pmod{P+1},~ \sum_{i=1}^n \alpha(x)_i \equiv c \pmod{2}\right\}
    \end{align*}
    can uniquely correct any single deletion given the deleted symbol and an interval of length $P$ containing the erroneous coordinate.
\end{corollary}

For the special case where $P=2$, we will now introduce a construction that achieves the same objective as in Corollary \ref{cor:sVTdecoder} using only one bit of redundancy. 
The number of \emph{inversions} in a sequence $\boldsymbol{x}\in \Sigma_q^n$ is defined as $\mathrm{Inv}(\boldsymbol{u}):= \left| \{(s,t) : 1 \leq s < t \leq n, x_s > x_t \} \right|$. As established by \cite[Lemma 6.9]{Sun-23-IT-BDR}, inserting a symbol into two consecutive positions of $\bm{x}$ results in two new vectors whose inversions have distinct parities. This leads directly to the following result.

\begin{corollary}\label{cor:inversion}
    For any $c\in \Sigma_2$, the code
    \begin{align*}
        \left\{\boldsymbol{x} \in \Sigma_q^n : \mathrm{Inv}(\bm{x}) \equiv c \pmod{2}\right\}
    \end{align*}
    can uniquely correct any single deletion given the deleted symbol and an interval of length two containing the erroneous coordinate.
\end{corollary}

\section{Bounds on the Redundancy of Optimal \texorpdfstring{$(t_r, t_c)$}{}-Criss-Cross Deletion Correcting Codes}\label{sec:bound}%of Criss-Cross Deletion Correcting Codes

To the best of our knowledge, there are no existing upper or lower bounds on the redundancy of optimal $(t_r, t_c)$-criss-cross deletion correcting codes, except for a sphere-packing type lower bound specific to the case where $(t_r, t_c) = (1,1)$. In \cite{Bitar-21-IT}, Bitar et al. derived a sphere-packing type lower bound on the redundancy of $(1, 1)$-criss-cross deletion correcting codes by characterizing the arrays that can be obtained from a $(1, 1)$-criss-cross deletion. Their method can be viewed as a generalization of the approach developed for the one-dimensional case. However, their discussion is complex, making it challenging to generalize their method to accommodate $(t_r, t_c)$-criss-cross deletions.

In Subsection \ref{subsec:SP-bound}, we propose a simpler approach to derive a sphere-packing type lower bound on the redundancy of $(t_r, t_c)$-criss-cross deletion correcting codes. 
Our key idea consists of the following two steps:
\begin{itemize}
    \item Demonstrating that for any $ n \times n $ array with at least one row and one column having a number of runs, where a run in a sequence is defined as a maximal substring consisting of the same symbol, that is asymptotically lower bounded by $\frac{q-1}{q}n$, the size of its $(t_r, t_c)$-criss-cross deletion ball is on the order of $ n^{t_r+t_c} $;
    \item Establishing that the number of $ n \times n $ arrays whose rows and columns contain a number of runs asymptotically below $\frac{q-1}{q}n$ is significantly less than $\frac{q^{n^2-(t_r+t_c)n}}{n^{t_r+t_c}}$.
\end{itemize}
Although our method is straightforward, the resulting conclusion is almost as good as that of \cite{Bitar-21-IT}, even for the parameters $(t_r, t_c) = (1, 1)$.
Subsequently, in Subsection \ref{subsec:GV-bound} we present a Gilbert-Varshamov type upper bound on the redundancy of optimal $(t_r, t_c)$-criss-cross deletion correcting codes by demonstrating that for a given $ n \times n $ array $\boldsymbol{X}$, the number of arrays whose $(t_r, t_c)$-criss-cross deletion ball intersects non-trivially with the $(t_r, t_c)$-criss-cross deletion ball of $\boldsymbol{X}$ is on the order of $ q^{(t_r+t_c)n} n^{2(t_r+t_c)} $.

\subsection{Sphere-Packing Type Lower Bound}\label{subsec:SP-bound}

\begin{theorem}\label{thm:lb}
    Let $\epsilon = \sqrt{\frac{(t_r+t_c+1) \ln n}{2(n-1)}}$ be such that $0<\epsilon- \frac{1/q+\epsilon-4\max\{t_r,t_c\}+4}{n}\leq \frac{q-1}{2q}$. When $n\geq q$, the optimal redundancy of $q$-ary $(t_r, t_c)$-criss-cross deletion correcting codes is lower bounded by  $(t_r+t_c)n \log q+ (t_r+t_c)\log n+O_{q,t_r,t_c}(1)$.
\end{theorem}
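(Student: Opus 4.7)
The plan is a sphere-packing argument adapted to the two-dimensional setting. For any $(t_r,t_c)$-criss-cross deletion correcting code $\mathcal{C}\subseteq \Sigma_q^{n\times n}$, the balls $\mathbb{D}_{t_r,t_c}(\boldsymbol{X})$ for $\boldsymbol{X}\in\mathcal{C}$ are pairwise disjoint subsets of $\Sigma_q^{(n-t_r)\times(n-t_c)}$, so one starts from the basic inequality $\sum_{\boldsymbol{X}\in\mathcal{C}}|\mathbb{D}_{t_r,t_c}(\boldsymbol{X})| \leq q^{(n-t_r)(n-t_c)}$. To convert this into the claimed redundancy bound, I would call $\boldsymbol{X}$ \emph{typical} if it has at least one row and at least one column whose number of runs is at least $\tfrac{q-1}{q}n - \epsilon n + O(1)$, and \emph{atypical} otherwise. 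Splitting $\mathcal{C}$ along this dichotomy gives $|\mathcal{C}|\leq \frac{q^{(n-t_r)(n-t_c)}}{\min_{\boldsymbol{X}\,\text{typical}}|\mathbb{D}_{t_r,t_c}(\boldsymbol{X})|} + |\{\boldsymbol{X}:\boldsymbol{X}\text{ atypical}\}|$, and the whole task reduces to (i) a pointwise lower bound $|\mathbb{D}_{t_r,t_c}(\boldsymbol{X})| = \Omega_{q,t_r,t_c}(n^{t_r+t_c})$ for typical $\boldsymbol{X}$, and (ii) a counting estimate $|\{\boldsymbol{X}\text{ atypical}\}| = o\bigl(q^{n^2-(t_r+t_c)n}/n^{t_r+t_c}\bigr)$.

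The counting half (ii) is the cleaner of the two. For a uniformly random row $X_{i,[1:n]}$, the indicators $\mathbbm{1}[X_{i,j+1}\neq X_{i,j}]$ are i.i.d.\ Bernoulli with parameter $(q-1)/q$, since uniformity of $X_{i,j+1}$ given $X_{i,j}$ yields both unbiasedness and independence across $j$. Hoeffding's inequality therefore gives $\Pr[\text{row is run-poor}]\leq \exp(-2\epsilon^2(n-1))$, and since the $n$ rows of a uniform array are independent, $\Pr[\text{every row is run-poor}]\leq \exp(-2\epsilon^2 n(n-1))$, with the symmetric bound for columns. With the chosen $\epsilon^2 = (t_r+t_c+1)\ln n/(2(n-1))$, this probability is at most $n^{-(t_r+t_c+1)n}$, so a union bound yields at most $2q^{n^2}\cdot n^{-(t_r+t_c+1)n}$ atypical arrays, which is vanishingly small compared to $q^{n^2-(t_r+t_c)n}/n^{t_r+t_c}$ whenever $n\geq q$; the additional technical side condition imposed on $\epsilon$ in the statement is used to keep lower-order corrections benign.

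For the pointwise bound (i), I would fix a typical $\boldsymbol{X}$ together with a run-rich row index $i^\ast$ and a run-rich column index $j^\ast$, and restrict attention to the $\binom{n-1}{t_r}\binom{n-1}{t_c}=\Theta(n^{t_r+t_c})$ deletion pairs $(R,C)$ with $i^\ast\notin R$ and $j^\ast\notin C$. In the resulting output $\boldsymbol{Y}$, the image of row $i^\ast$ occupies a row index in $\{i^\ast-t_r,\dots,i^\ast\}$ and the image of column $j^\ast$ occupies a column index in $\{j^\ast-t_c,\dots,j^\ast\}$, so there are only $O_{t_r,t_c}(1)$ candidate locations from which to read off the pair (image of row $i^\ast$, image of column $j^\ast$). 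The row image is a function of $C$ only, and the column image of $R$ only; using the classical one-dimensional observation that a sequence with $\Theta(n)$ runs has a $t$-deletion ball of size $\Omega(n^t)$ and, moreover, that all but an $o(1)$-fraction of deletion sets $C$ produce distinct subsequences, the map $(R,C)\mapsto(\text{row image},\text{column image})$ is injective on a $(1-o(1))$-fraction of the domain. Dividing the $\Theta(n^{t_r+t_c})$-size domain by the $O_{t_r,t_c}(1)$ position choices then yields $|\mathbb{D}_{t_r,t_c}(\boldsymbol{X})|\geq \Omega_{q,t_r,t_c}(n^{t_r+t_c})$; plugging this and (ii) into the sphere-packing inequality and taking base-two logarithms gives the claimed redundancy bound.

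The main obstacle is exactly the collision control in step (i): two distinct deletion sets $C,C'$ can in principle realign inside a common run and produce the same subsequence of row $i^\ast$, and likewise for the column. Making the bound honest requires a clean one-dimensional lemma that says, for a sequence with at least $\tfrac{q-1}{q}n - o(n)$ runs, the number of deletion preimages of any fixed subsequence is $O_{q,t_r,t_c}(1)$ outside a vanishing set of targets. This is precisely the type of statement the hypothesis on $\epsilon$ is calibrated for, and verifying it (together with the combinatorial bookkeeping needed to turn the row/column injectivity into array injectivity) is where the bulk of the technical work sits.
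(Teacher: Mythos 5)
Your overall architecture (split the code into arrays having a run-rich row and a run-rich column versus the rest, bound the run-poor part by a per-row Hoeffding estimate, and sphere-pack the run-rich part) is exactly the paper's, and your part (ii) is sound: your Bernoulli$\left(\tfrac{q-1}{q}\right)$ indicator argument is precisely how the bound $q^n e^{-2(n-1)\epsilon^2}$ on run-poor sequences (which the paper imports from Cullina--Kiyavash) is obtained, and the comparison with $q^{n^2-(t_r+t_c)n}/n^{t_r+t_c}$ under $n\geq q$ goes through as you say.

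The gap is in part (i), and it is exactly the collision issue you flag at the end --- but the two statements you propose to lean on are false, not merely unverified. First, ``all but an $o(1)$-fraction of deletion sets produce distinct subsequences'' fails badly: a binary row whose runs all have length $2$ has exactly $n/2$ runs (so it qualifies as run-rich at your threshold), yet every deletion set colliding inside runs gives multiplicity about $2^{t_c}$, so essentially \emph{no} deletion set is injectively mapped. Second, the fallback lemma you ask for (preimage multiplicity $O_{q,t}(1)$ outside a vanishing set) is also false: a row that is half alternating and half runs of length $\log n$ still has $\sim n/2$ runs, but a constant fraction of deletion sets land in the long runs and have multiplicity $\omega(1)$. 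The repair is to stop counting deletion sets altogether and count \emph{outputs}, which is what the paper does: Levenshtein's bound says a sequence with $r$ runs has at least $\binom{r-s+1}{s}$ \emph{distinct} subsequences after $s$ deletions, with no collision control needed. Concretely, either (a) keep your setup and lower-bound the number of distinct (row-image content, column-image content) pairs by the product of two Levenshtein bounds, then divide by the $(t_r+1)(t_c+1)$ possible position pairs at which the surviving row/column can sit in the output; or (b) follow the paper's cleaner variant: by pigeonhole one of $\boldsymbol{X}_{i^\ast,[1:j^\ast-1]}$, $\boldsymbol{X}_{i^\ast,[j^\ast+1:n]}$ has at least half the runs (likewise for the column), and restricting all $t_c$ column deletions to that side and all $t_r$ row deletions to the corresponding side of $i^\ast$ pins the image of row $i^\ast$ and column $j^\ast$ at a \emph{fixed} position of the output, so distinct content pairs give distinct arrays and $|\mathbb{D}_{t_r,t_c}(\boldsymbol{X})|\geq \binom{t/2-t_r+1}{t_r}\binom{t/2-t_c+1}{t_c}=\Omega_{q,t_r,t_c}(n^{t_r+t_c})$ with no division at all. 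With that substitution your proof closes.
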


\begin{proof}
Let $\mathcal{C} \subseteq \Sigma_q^{n \times n}$ be a $(t_r, t_c)$-criss-cross deletion correcting code. 
Set $t = \big(\frac{q-1}{q} - \epsilon\big)(n - 1)+1$.
We partition $\mathcal{C}$ into two disjoint subsets: $\mathcal{C} = \mathcal{C}_1 \sqcup \mathcal{C}_2$. The subset $\mathcal{C}_1$ consists of those arrays in $\mathcal{C}$ that have at least one row and one column containing a significant number of runs, while $\mathcal{C}_2$ includes the remaining arrays in $\mathcal{C}$.
Specifically, let $\gamma(\boldsymbol{x})$ denote the number of runs in sequence $\boldsymbol{x}$, we define:
\begin{align*}
\mathcal{C}_1 
= \big\{\boldsymbol{X} \in \mathcal{C}: &~\gamma(\boldsymbol{X}_{i,[1:n]}) \geq t+1 \text{ for some } i\in [1:n],\\
&~~~~\gamma(\boldsymbol{X}_{[1:n],j}) \geq t + 1 \text{ for some } j\in [1:n]\big\}
\end{align*}
and 
\begin{align*}
\mathcal{C}_2 
= \big\{\boldsymbol{X} \in \mathcal{C}: \gamma(\boldsymbol{X}_{i,[1:n]})\leq t, \gamma(\boldsymbol{X}_{[1:n],i}) \leq t \text{ for } i\in [1:n]\big\}.
\end{align*}

For any $\boldsymbol{X} \in \mathcal{C}_1$, there exist indices $i,j\in [1:n]$ such that $\gamma(\boldsymbol{X}_{i,[1:n]}) \geq t+1$ and $\gamma(\boldsymbol{X}_{[1:n],j}) \geq t+1$.
Observe that $\gamma(\boldsymbol{X}_{i,[1:j-1]})+ \gamma(X_{i,j})+ \gamma(\boldsymbol{X}_{i,[j+1:n]})\geq \gamma(\boldsymbol{X}_{i,[1:n]})\geq t+1$.
This implies that $\gamma(\boldsymbol{X}_{i,[1:j-1]})+ \gamma(\boldsymbol{X}_{i,[j+1:n]})\geq t$, which, by the pigeonhole principle, leads us to conclude that either $\gamma(\boldsymbol{X}_{i,[1:j-1]})\geq t/2$ or $\gamma(\boldsymbol{X}_{i,[j+1:n]})\geq t/2$.
Similarly, we can also deduce that either $\gamma(\boldsymbol{X}_{[1:i-1],j})\geq t/2$ or $\gamma(\boldsymbol{X}_{[i+1:n],j})\geq t/2$.
Without loss of generality, we assume $\gamma(\boldsymbol{X}_{i,[1:j-1]})\geq t/2$ and $\gamma(\boldsymbol{X}_{[1:i-1],j})\geq t/2$, as the other cases can be discussed similarly.
Now, we define $\mathbb{D}_{t_r, t_c}'(\boldsymbol{X}) \subseteq \Sigma_q^{(n-t_r)\times (n-t_c)}$ as the set of arrays that can be obtained from $\boldsymbol{X}$ by deleting $t_r$ rows from its first $i-1$ rows and $t_c$ columns from its first $j-1$ columns. 
Clearly, $\mathbb{D}_{t_r, t_c}'(\boldsymbol{X})$ is a subset of $\mathbb{D}_{t_r, t_c}(\boldsymbol{X})$.
By counting the number of choices for the $(i-t_r)$-th row and $(j-t_c)$-th column in $\mathbb{D}_{t_r, t_c}'(\boldsymbol{X})$, we can derive a lower bound on the size of $\mathbb{D}_{t_r, t_c}(\boldsymbol{X})$:
\begin{align*}
  |\mathbb{D}_{t_r, t_c}(\boldsymbol{X})|
  &\geq |\mathbb{D}_{t_r, t_c}'(\boldsymbol{X})|\\
  &\geq |\mathbb{D}_{t_r}(\boldsymbol{X}_{i,[1:j-1]})| \cdot |\mathbb{D}_{t_c}(\boldsymbol{X}_{[1:i-1],j})|\\
  &\stackrel{(\star)}{\geq} \binom{\frac{t}{2}-t_r+1}{t_r}\binom{\frac{t}{2}-t_c+1}{t_c},
\end{align*}
where the last inequality can be justified by \cite[Equation (1)]{Levenshtein-66-SPD-1D}
 that the size of an $s$-deletion ball centered at a sequence with at least $r$ runs is lower bounded by $\binom{r-s+1}{s}$.
Since $\mathcal{C}$ is a $(t_r, t_c)$-criss-cross deletion correcting code and $\mathcal{C}_1$ is a subset of $\mathcal{C}$, we know that $\mathcal{C}_1$ is also a $(t_r, t_c)$-criss-cross deletion correcting code.
Then we can derive the following equation:
\begin{align*}
  \Big| \bigcup_{\boldsymbol{X} \in \mathcal{C}_1} \mathbb{D}_{t_r,t_c}(\boldsymbol{X}) \Big|
  &= \sum_{\boldsymbol{X} \in \mathcal{C}_1}|\mathbb{D}_{t_r,t_c}(\boldsymbol{X})|\\
  &\geq |\mathcal{C}_1| \cdot \binom{\frac{t}{2}-t_r+1}{t_r}\binom{\frac{t}{2}-t_c+1}{t_c}.
\end{align*}
Since $\mathbb{D}_{t_r,t_c}(\boldsymbol{X})$ is a subset of $\Sigma_q^{(n-t_r)\times (n-t_c)}$ for $\boldsymbol{X}\in \mathcal{C}_1$, we have
\begin{align*}
  \Big| \bigcup_{\boldsymbol{X} \in \mathcal{C}_1} \mathbb{D}_{t_r,t_c}(\boldsymbol{X}) \Big| \leq q^{(n-t_r)(n-t_c)}.
\end{align*}
This implies that 
\begin{align*}
  |\mathcal{C}_1| 
  &\leq \frac{q^{(n-t_r)(n-t_c)}}{\binom{\frac{t}{2}-t_r+1}{t_r}\binom{\frac{t}{2}-t_c+1}{t_c}} \\
  &\leq \frac{t_r!t_c!q^{(n-t_r)(n-t_c)}}{(\frac{t}{2}-2t_r+2)^{t_r}(\frac{t}{2}-2t_c+2)^{t_c}}\\
  &= \frac{t_r!t_c!q^{(n-t_r)(n-t_c)}}{\left(K_1 \frac{(q-1)n}{2q}\right)^{t_r}\cdot \left(K_2 \frac{(q-1)n}{2q}\right)^{t_c}}\\
  &\leq \frac{4^{t_r+t_c} t_r!t_c! q^{n^2-(t_r+t_c)n+t_rt_c}}{\left(\frac{q-1}{q}\right)^{t_r+t_c}n^{t_r+t_c}}\\
  &= O_{q,t_r,t_c}\left(\frac{q^{n^2}}{n^{t_r+t_c}q^{(t_r+t_c)n}}\right),
\end{align*}
where
\begin{gather*}
    K_1= 1-\frac{q}{q-1}\left(\epsilon- \frac{1/q+\epsilon-4t_r+4}{n} \right)\geq \frac{1}{2},\\
    K_2= 1-\frac{q}{q-1}\left(\epsilon- \frac{1/q+\epsilon-4t_c+4}{n} \right)\geq \frac{1}{2}.
\end{gather*}

Next, we proceed to upper bound $|\mathcal{C}_2|$. To this end, we define the following set:
\begin{align*}
\mathcal{C}_2 '
= \big\{\boldsymbol{X} \in \mathcal{C}: r(\boldsymbol{X}_{i,[1:n]}) \leq t \text{ for each $i\in [1,n]$}\big\}.
\end{align*}
It is clear that $\mathcal{C}_2 \subseteq \mathcal{C}_2'$ and that $|\mathcal{C}_2|\leq |\mathcal{C}_2'|$.
By \cite[Lemma 10]{Cullina-14-IT}, the number of $q$-ary sequences of length $n$ containing $t=(\frac{q-1}{q}-\epsilon)(n-1)+1$ or fewer runs is upper bounded by $q^{n}e^{-2(n-1)\epsilon^2}$.
Recall that $\epsilon = \sqrt{\frac{(t_r+t_c+1) \ln n}{2(n-1)}}$, we can compute
\begin{align*}
  |\mathcal{C}_2|
  \leq|\mathcal{C}_2'|
  \leq \big(q^ne^{-2(n-1)\epsilon^2}\big)^n
  = \frac{q^{n^2}}{n^{n(t_r+t_c+1)}}\leq \frac{q^{n^2}}{n^{t_r+t_c}q^{(t_r+t_c)n}},
\end{align*}
where the last inequality holds by $n\geq q$.

Consequently, we can calculate
\begin{align*}
  |\mathcal{C}|
  &= |\mathcal{C}_1|+ |\mathcal{C}_2|\leq O_{q,t_r,t_c}\left(\frac{q^{n^2}}{n^{t_r+t_c}q^{(t_r+t_c)n}}\right).
\end{align*}
The conclusion follows from the definition of redundancy.  
\end{proof}

\begin{remark}
  Let $\epsilon = \sqrt{\frac{(t_r+t_c+1) \ln n}{2(n-1)}}$, for sufficiently large $n$, we have $\epsilon=K_1\sqrt{\frac{\ln n}{n}}$ and $\frac{1/q+\epsilon-4\max\{t_r,t_c\}+4}{n}=\frac{K_2}{n}$ for some positive constants $K_1$ and $K_2$.
  In this setting, the condition $0<\epsilon- \frac{1/q+\epsilon-4\max\{t_r,t_c\}+4}{n}\leq \frac{q-1}{2q}$ is automatically satisfied.
\end{remark}

\begin{remark}\label{rmk:burst}
  A $t$-run in a sequence is defined as a maximal substring of period $t$. For $(t_r, t_c)$-criss-cross burst-deletion correcting codes, by utilizing \cite[Claim 3.1]{Sun-23-IT-BDR} and \cite[Claim 4]{Wang-24-IT} and considering arrays that have at least one row containing a significant number of $t_r$-runs and one column containing a significant number of $t_c$-runs, one can follow an analogous discussion found in Theorem \ref{thm:lb} to demonstrate that the optimal redundancy is lower bounded by $(t_r + t_c)n \log q + 2\log n + O_{q, t_r, t_c}(1)$. 
\end{remark}

\subsection{Gilbert-Varshamov Type Upper Bound}\label{subsec:GV-bound}

\begin{theorem}\label{thm:ub}
    The optimal redundancy of $q$-ary $(t_r, t_c)$-criss-cross deletion correcting codes is upper bounded by  $(t_r+t_c) n\log q+ 2(t_r+t_c) \log n+O_{q,t_r,t_c}(1)$.
\end{theorem}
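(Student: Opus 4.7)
The plan is to apply the standard Gilbert-Varshamov greedy construction: I will upper bound, for any fixed array $\boldsymbol{X} \in \Sigma_q^{n \times n}$, the size of the \emph{confusability set}
\[
\mathcal{N}(\boldsymbol{X}) = \{\boldsymbol{Y} \in \Sigma_q^{n \times n} : \mathbb{D}_{t_r,t_c}(\boldsymbol{X}) \cap \mathbb{D}_{t_r,t_c}(\boldsymbol{Y}) \neq \emptyset\},
\]
and then argue that a greedy procedure (repeatedly picking an array and discarding its confusability set from $\Sigma_q^{n\times n}$) produces a $(t_r,t_c)$-criss-cross deletion correcting code of size at least $q^{n^2} / \max_{\boldsymbol{X}} |\mathcal{N}(\boldsymbol{X})|$. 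The key observation, which the introduction already hints at, is that $\boldsymbol{Y} \in \mathcal{N}(\boldsymbol{X})$ exactly when $\boldsymbol{Y}$ can be produced from $\boldsymbol{X}$ by first performing a $(t_r,t_c)$-criss-cross deletion to reach some $(n-t_r) \times (n-t_c)$ intermediate array $\boldsymbol{Z}$, and then performing a $(t_r,t_c)$-criss-cross insertion on $\boldsymbol{Z}$. Hence
\[
|\mathcal{N}(\boldsymbol{X})| \leq \max_{\boldsymbol{Z}} |\mathbb{I}_{t_r,t_c}(\boldsymbol{Z})| \cdot |\mathbb{D}_{t_r,t_c}(\boldsymbol{X})|.
\]

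Next I would carry out the two counts. For the deletion factor, the number of choices for the indices of the $t_r$ deleted rows and $t_c$ deleted columns yields $|\mathbb{D}_{t_r,t_c}(\boldsymbol{X})| \leq \binom{n}{t_r}\binom{n}{t_c}$. For the insertion factor, starting from an $(n-t_r) \times (n-t_c)$ array $\boldsymbol{Z}$, inserting $t_c$ columns amounts to choosing $t_c$ positions among $n$ and assigning $(n-t_r) t_c$ symbols, giving at most $\binom{n}{t_c} q^{(n-t_r) t_c}$ outcomes; then inserting $t_r$ rows into the resulting $(n-t_r) \times n$ array contributes a further factor of $\binom{n}{t_r} q^{n t_r}$. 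Multiplying,
\[
|\mathcal{N}(\boldsymbol{X})| \leq \binom{n}{t_r}^2 \binom{n}{t_c}^2 \, q^{(t_r+t_c) n - t_r t_c} = O_{q,t_r,t_c}\!\left(n^{2(t_r+t_c)}\, q^{(t_r+t_c)n}\right),
\]
using $\binom{n}{k} \leq n^k/k!$.

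Finally, the greedy construction yields a code $\mathcal{C}$ of cardinality at least $q^{n^2} / |\mathcal{N}(\boldsymbol{X})|$, so its redundancy satisfies
\[
n^2 \log q - \log |\mathcal{C}| \leq (t_r+t_c) n \log q + 2(t_r+t_c) \log n + O_{q,t_r,t_c}(1),
\]
which is precisely the claim. There is no serious obstacle here; the only point that requires care is the justification of the equivalence between the confusability relation and the deletion-then-insertion characterization, and making sure the insertion count does not double-count positions in a way that would weaken the bound — a mild subtlety since different position/symbol choices may yield the same $\boldsymbol{Y}$, but since we only need an upper bound on $|\mathcal{N}(\boldsymbol{X})|$ a crude product over the two phases suffices.
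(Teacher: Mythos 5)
Your proposal is correct and takes essentially the same route as the paper: the paper also bounds the number of arrays confusable with a fixed $\boldsymbol{X}$ via the same decomposition (delete $t_r$ rows and $t_c$ columns, then insert $t_c$ columns of length $n-t_r$, then $t_r$ rows of length $n$), arriving at the same $O_{q,t_r,t_c}\big(n^{2(t_r+t_c)}q^{(t_r+t_c)n}\big)$ count and hence the same redundancy bound. The only cosmetic differences are that the paper phrases the selection step as the Caro--Wei lower bound on the independence number of the confusability graph rather than your plain greedy argument, and it counts insertions with Levenshtein's insertion-ball formula where you use the cruder positions-times-symbols product; both give the identical asymptotics.
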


\begin{proof}
    Consider a graph $G$ whose vertex set $V(G)$ is $\Sigma_q^{n\times n}$. Two distinct vertices $\boldsymbol{X}$ and $\boldsymbol{Z}$ in $V(G)$ are connected by an edge if and only if the intersection between their $(t_r,t_c)$-criss-cross deletion balls is non-empty. An independent set of $G$ is a subset of $V(G)$ such that no two distinct vertices are connected by an edge and the independence number of $G$ is the cardinality of the largest independent set of $G$. Let $\vartheta(G)$ represent the independence number of $G$. 
    By definition, a subset $\mathcal{C} \subseteq \Sigma_q^{n\times n}$ is a $(t_r,t_c)$-criss-cross deletion correcting code if and only if $\mathcal{C}$ is an independent set in $G$. Therefore, the largest size of $q$-ary $(t_r, t_c)$-criss-cross deletion correcting codes equals $\vartheta(G)$, implying the optimal redundancy of $q$-ary $(t_r, t_c)$-criss-cross deletion correcting codes is $n^2\log q- \log \big(\vartheta(G)\big)$.
    It remains to present a lower bound on the size of $\vartheta(G)$.

    For a vertex $\boldsymbol{X}$, let $d(\boldsymbol{X})$ be the number of vertices $\boldsymbol{Z}$ such that they are connected in $G$. By \cite[Page 100, Theorem 1]{Alon}, we have that 
    \begin{equation}\label{eq:alpha(G)}
         \vartheta(G) \geq \sum_{\boldsymbol{X} \in \Sigma_q^n} \frac{1}{d(\boldsymbol{X}) + 1}.
    \end{equation}
    
    Observe that each array $\boldsymbol{Z}$ (including $\boldsymbol{X}$ itself) for which the intersection between its $(t_r, t_c)$-criss-cross deletion ball and the $(t_r, t_c)$-criss-cross deletion ball of $\boldsymbol{X}$ is non-empty can be obtained through the following steps:
    \begin{enumerate}
        \item \emph{Deletion of Rows and Columns:} Delete $t_r$ rows and $t_c$ columns from $\boldsymbol{X}$, resulting in an array $\boldsymbol{Z}'$ of size $(n - t_r) \times (n - t_c)$. Since there are at most $\binom{n}{t_r} \binom{n}{t_c}$ ways to choose the rows and columns, the number of choices for $\boldsymbol{Z}'$ is upper bounded by  $\binom{n}{t_r} \binom{n}{t_c}$.
        
        \item \emph{Insertion of Column Vectors:} Insert $t_c$ column vectors of length $n - t_r$ from $\Sigma_q$ into $\boldsymbol{Z}'$ to obtain an array $\boldsymbol{Z}''$ of size $(n - t_r) \times n$ over $\Sigma_q$. 
        Since for each given $\boldsymbol{Z}'$, this operation is equivalent to inserting $t_c$ symbols from $\Sigma_{q^{n-t_r}}$ into a row vector of length $n - t_c$ over $\Sigma_{q^{n-t_r}}$.
        By \cite[Equation (24)]{Levenshtein-01-JCTA-recons}, the total number of choices for $\boldsymbol{Z}''$ can be upper bounded by 
        \[
        \binom{n}{t_r} \binom{n}{t_c} \sum_{i=0}^{t_c} \binom{n}{i} (q^{n-t_r} - 1)^i.
        \]
            
        \item \emph{Insertion of Row Vectors:} Insert $t_r$ row vectors of length $n$ from $\Sigma_q$ into $\boldsymbol{Z}''$ to obtain an array $\boldsymbol{Z}$ of size $n \times n$ over $\Sigma_q$ that is connected to $\boldsymbol{X}$ in $G$. 
        Since for each given $\boldsymbol{Z}''$, this process can be interpreted as inserting $t_r$ symbols from $\Sigma_{q^n}$ into a column vector of length $n - t_r$ over $\Sigma_{q^n}$.
        Again by \cite[Equation (24)]{Levenshtein-01-JCTA-recons}, the total number of choices for $\boldsymbol{Z}$ can be upper bounded by
        \[
        \binom{n}{t_r} \binom{n}{t_c} \sum_{i=0}^{t_c} \binom{n}{i} (q^{n-t_r} - 1)^i \sum_{j=0}^{t_r} \binom{n}{j} (q^n - 1)^j.
        \]
    \end{enumerate}
    Consequently, we get
    \begin{align*}
        d(\boldsymbol{X}) + 1 
        &\leq \binom{n}{t_r} \binom{n}{t_c} \sum_{i=0}^{t_c}\binom{n}{i}(q^{n-t_r}-1)^i \sum_{j=0}^{t_r}\binom{n}{j}(q^{n}-1)^j\\
        &\leq \frac{n^{t_r}}{t_r!} \frac{n^{t_c}}{t_c!} \frac{t_c n^{t_c} q^{t_c(n-t_r)}}{t_c!} \frac{t_r n^{t_r} q^{t_rn}}{t_r!}\\
        &= O_{q,t_r,t_c}\left(n^{2(t_r+t_c)q^{(t_r+t_c)n}}\right). %&=\frac{q^{(t_r+t_c)n}n^{2(t_r+t_c)}}{t_r!t_c!(t_r-1)!(t_c-1)!q^{t_rt_c}}.
    \end{align*}
    Then by Equation (\ref{eq:alpha(G)}), we can compute
    \begin{align*}
      \vartheta(G)
      &\geq \frac{q^{n^2}}{d(\boldsymbol{X}) + 1}= O_{q,t_r,t_c}\left(\frac{q^{n^2}}{n^{2(t_r+t_c)}q^{(t_r+t_c)n}}\right).
    \end{align*}
    Thus, the conclusion follows from the definition of redundancy.  
    \begin{comment}
    \begin{align*}
      n^2\log q- \log (\vartheta(G)) 
      &\leq (t_r +t_c) n\log q+ 2(t_r+t_c) \log n- t_rt_c\log q- \log \big(t_r!t_c!(t_r-1)!(t_c-1)! \big)\\
      &= (t_r+t_c) n\log q+ 2(t_r+t_c) \log n+O_{q,t_r,t_c}(1),
    \end{align*}
    which completes the proof.
    \end{comment}
\end{proof}

The following corollary summarizes the main contributions of this section.

\begin{corollary}
  Let $\mathcal{C}\subseteq \Sigma_q^{n\times n}$ be an optimal $q$-ary $(t_r, t_c)$-criss-cross deletion correcting code, then its redundancy ranges from $(t_r+t_c)n \log q+ (t_r+t_c)\log n+O_{q,t_r,t_c}(1)$ to $(t_r+t_c)n\log q + 2(t_r+t_c)\log n+O_{q,t_r,t_c}(1)$.
\end{corollary}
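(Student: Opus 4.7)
The plan is to derive the corollary as an immediate consequence of the two main results of this section. By definition, an optimal $q$-ary $(t_r,t_c)$-criss-cross deletion correcting code is one of maximum cardinality among all such codes of the given size; equivalently, its redundancy $n^2 \log q - \log|\mathcal{C}|$ is minimal. Any such code must in particular satisfy the sphere-packing type bound of Theorem~\ref{thm:lb}, which forces its redundancy to be at least $(t_r+t_c)n\log q + (t_r+t_c)\log n + O_{q,t_r,t_c}(1)$. This establishes the lower endpoint of the claimed range.

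For the upper endpoint, I would invoke Theorem~\ref{thm:ub}, which guarantees the existence of a $(t_r,t_c)$-criss-cross deletion correcting code whose redundancy does not exceed $(t_r+t_c)n\log q + 2(t_r+t_c)\log n + O_{q,t_r,t_c}(1)$. Since an optimal code has redundancy no larger than that of any specific admissible code produced by the existence argument, the upper endpoint follows. Combining the two inequalities then delivers the asserted range.

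There is no substantive obstacle here: the corollary is essentially a packaging of Theorems~\ref{thm:lb} and~\ref{thm:ub}, and the proof collapses to citing them and unpacking the definition of ``optimal.'' The only mild bookkeeping concern is ensuring consistent treatment of the additive constants $O_{q,t_r,t_c}(1)$ across both bounds, which is routine since both theorems adopt the same convention and these constants dominate any lower-order contributions arising from the earlier algebraic manipulations.
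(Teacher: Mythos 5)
Your proposal is correct and matches the paper, which states this corollary as an immediate summary of Theorems~\ref{thm:lb} and~\ref{thm:ub} without any additional argument. Citing the sphere-packing lower bound for any code (hence the optimal one) and the Gilbert--Varshamov existence result to cap the optimal redundancy is exactly the intended reasoning.
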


\section{Construction of Optimal \texorpdfstring{$(1,1)$}{}-Criss-Cross deletion correcting Codes Over Non-Binary Alphabets}\label{sec:non-binary}%

This section details a construction of $(1,1)$-criss-cross deletion correcting codes that exhibits optimal redundancy (up to an additive constant term) in the context of non-binary alphabets.
We begin by discussing the underlying ideas that inform our code construction. 
Let $\boldsymbol{X}$ be an array of size $n\times n$, and let $\boldsymbol{Y}$ be the array of size $(n-1)\times (n-1)$ that is obtained by deleting the $i$-th row and $j$-th column from $\boldsymbol{X}$.
As illustrated in Figure \ref{fig:array}, we can visualize the process of obtaining $\boldsymbol{Y}$ as follows: First, we move the $i$-th row of $\boldsymbol{X}$ to the bottom, resulting in the array $\boldsymbol{X}'$. Next, we shift the $j$-th column of $\boldsymbol{X}'$ to the rightmost position, yielding $\boldsymbol{X}''$. Finally, $\boldsymbol{Y}$ is the top-left corner subarray of $\boldsymbol{X}''$, with size $(n-1) \times (n-1)$.
Our approach involves the following three key steps:
\begin{itemize}
  \item \emph{Determine the missing symbols in each row and column:} We enforce that the sums of the entries in each row and column of $\boldsymbol{X}$ yield the same residual modulo $q$. This ensures that if the array experiences a row deletion followed by a column deletion, the deleted symbols within each row and column can be uniquely determined.
  In other words, this allows us to determine $\boldsymbol{X}''$ from $\boldsymbol{Y}$.
  
  \item \emph{Determine the column deletion position $j$:} By knowing $\boldsymbol{X}''$, we can ascertain the composition of each column of $\boldsymbol{X}$. The \emph{composition} of a sequence $\boldsymbol{u}\in \Sigma_q^n$ is represented by
       \begin{align*}
        \mathrm{Comp}(\boldsymbol{u}):= (\boldsymbol{u}|_0, \boldsymbol{u}|_1,\ldots, \boldsymbol{u}|_{q-1}),
      \end{align*}
      where $\boldsymbol{u}|_{i}$ denotes the number of occurrences of $i$ in $\bm{u}$, for $i\in [0:q-1]$.
      Clearly, the compositions of each column of $\boldsymbol{X}$ form a vector, denoted by $\mathrm{CCR}(\boldsymbol{X})$, of length $n$ over an alphabet of size $\binom{n+q-1}{q-1}$. We require that $\mathrm{CCR}(\boldsymbol{X})$ belongs to some one-dimensional single deletion correcting code, satisfying the additional constraint that adjacent entries are distinct.   
        Observe that $\mathrm{CCR}(\boldsymbol{X}'')_{[1:n-1]}$ can be obtained from $\mathrm{CCR}(\boldsymbol{X})$ by deleting its $j$-th entry $\mathrm{CCR}(\boldsymbol{X}'')_{n}$, allowing us to recover $\mathrm{CCR}(\boldsymbol{X})$ from $\mathrm{CCR}(\boldsymbol{X}'')$. 
        Since two adjacent entries in $\mathrm{CCR}(\boldsymbol{X})$ are distinct, we can determine the deleted position $j$ by comparing $\mathrm{CCR}(\boldsymbol{X})$ and $\mathrm{CCR}(\boldsymbol{X}'')_{[1:n-1]}$. 
        We then move the last column of $\boldsymbol{X}''$ to the $j$-th column to obtain the desired array $\boldsymbol{X}'$.
  \item \emph{Recover the array $\boldsymbol{X}$:} By interpreting each row of $\boldsymbol{X}$ as an integer (using a one-to-one mapping), we obtain a vector, denoted as $\mathrm{RIR}(\boldsymbol{X})$, of length $n$ over an alphabet of size $q^n$. 
  We encode $\mathrm{RIR}(\boldsymbol{X})$ into some one-dimensional single deletion correcting code, enabling us to recover $\mathrm{RIR}(\boldsymbol{X})$ from $\mathrm{RIR}(\boldsymbol{X}')$, since $\mathrm{RIR}(\boldsymbol{X}')_{[1:n-1]}$ can be obtained from $\mathrm{RIR}(\boldsymbol{X})$ after one deletion at position $i$.
  Given the bijection between $\mathrm{RIR}(\boldsymbol{X})$ and $\boldsymbol{X}$, knowing $\mathrm{RIR}(\boldsymbol{X})$ allows us to derive the correct $\boldsymbol{X}$.
\end{itemize}

\begin{figure}[t]
\begin{equation*}
\begin{gathered}
  \boldsymbol{X}= 
  \begin{bmatrix}
  \begin{array}{ccc:c:ccc}
    X_{1,1}   & \cdots & X_{1,j-1}   & {\color{red}X_{1,j}}   & X_{1,j+1}   & \cdots & X_{1,n}\\
    \vdots    & \ddots & \vdots      & {\color{red}\vdots}    & \vdots      & \ddots & \vdots\\
    X_{i-1,1} & \cdots & X_{i-1,j-1} & {\color{red}X_{i-1,j}} & X_{i-1,j+1} & \cdots & X_{i-1,n}\\
    \hdashline
    {\color{blue}X_{i,1}}   & {\color{blue}\cdots} & {\color{blue}X_{i,j-1}}     & {\color{green}X_{i,j}}   & {\color{blue}X_{i,j+1}}   & {\color{blue}\cdots} & {\color{blue}X_{i,n}}\\
    \hdashline
    X_{i+1,1} & \cdots & X_{i+1,j-1} & {\color{red}X_{i+1,j}} & X_{i+1,j+1} & \cdots & X_{i+1,n}\\
    \vdots    & \ddots & \vdots      & {\color{red}\vdots}    & \vdots      & \ddots & \vdots\\
    X_{n,1}   & \cdots & X_{n,j-1}   & {\color{red}X_{n,j}}   & X_{n,j+1}   & \cdots & X_{n,n}
  \end{array}
  \end{bmatrix}\\
  \updownarrow \\
  \boldsymbol{X}'= 
  \begin{bmatrix}
  \begin{array}{ccc:c:ccc}
    X_{1,1}   & \cdots & X_{1,j-1}   & {\color{red}X_{1,j}}   & X_{1,j+1}   & \cdots & X_{1,n}\\
    \vdots    & \ddots & \vdots      & {\color{red}\vdots}    & \vdots      & \ddots & \vdots\\
    X_{i-1,1} & \cdots & X_{i-1,j-1} & {\color{red}X_{i-1,j}} & X_{i-1,j+1} & \cdots & X_{i-1,n}\\
    X_{i+1,1} & \cdots & X_{i+1,j-1} & {\color{red}X_{i+1,j}} & X_{i+1,j+1} & \cdots & X_{i+1,n}\\
    \vdots    & \ddots & \vdots      & {\color{red}\vdots}    & \vdots      & \ddots & \vdots\\
    X_{n,1}   & \cdots & X_{n,j-1}   & {\color{red}X_{n,j}}   & X_{n,j+1}   & \cdots & X_{n,n}\\
    \hdashline
    {\color{blue}X_{i,1}}   & {\color{blue}\cdots} & {\color{blue}X_{i,j-1}}     & {\color{green}X_{i,j}}   & {\color{blue}X_{i,j+1}}   & {\color{blue}\cdots} & {\color{blue}X_{i,n}}
  \end{array}
  \end{bmatrix}\\
  \updownarrow \\
  \boldsymbol{X}''= 
  \begin{bmatrix}
  \begin{array}{cccccc:c}
    X_{1,1}   & \cdots & X_{1,j-1}   & X_{1,j+1}   & \cdots & X_{1,n}    & {\color{red}X_{1,j}}\\
    \vdots    & \ddots & \vdots      & \vdots      & \ddots & \vdots     & {\color{red}\vdots}\\
    X_{i-1,1} & \cdots & X_{i-1,j-1} & X_{i-1,j+1} & \cdots & X_{i-1,n}  & {\color{red}X_{i-1,j}} \\
    X_{i+1,1} & \cdots & X_{i+1,j-1} & X_{i+1,j+1} & \cdots & X_{i+1,n}  & {\color{red}X_{i+1,j}} \\
    \vdots    & \ddots & \vdots      & \vdots      & \ddots & \vdots     & {\color{red}\vdots}    \\
    X_{n,1}   & \cdots & X_{n,j-1}   & X_{n,j+1}   & \cdots & X_{n,n}    & {\color{red}X_{n,j}}   \\
    \hdashline
    {\color{blue}X_{i,1}}   & {\color{blue}\cdots} & {\color{blue}X_{i,j-1}}     & {\color{blue}X_{i,j+1}}   & {\color{blue}\cdots} & {\color{blue}X_{i,n}}     & {\color{green}X_{i,j}}   
  \end{array}
  \end{bmatrix}\\
  \updownarrow \\
  \boldsymbol{Y}= 
  \begin{bmatrix}
    X_{1,1}   & \cdots & X_{1,j-1}   & X_{1,j+1}   & \cdots & X_{1,n}   \\
    \vdots    & \ddots & \vdots      & \vdots      & \ddots & \vdots    \\
    X_{i-1,1} & \cdots & X_{i-1,j-1} & X_{i-1,j+1} & \cdots & X_{i-1,n} \\
    X_{i+1,1} & \cdots & X_{i+1,j-1} & X_{i+1,j+1} & \cdots & X_{i+1,n} \\
    \vdots    & \ddots & \vdots      & \vdots      & \ddots & \vdots    \\
    X_{n,1}   & \cdots & X_{n,j-1}   & X_{n,j+1}   & \cdots & X_{n,n}
  \end{bmatrix}
\end{gathered}
\end{equation*} 
\caption{Illustration of the criss-cross deletion process. Assume that $\boldsymbol{Y}$ is obtained form $\boldsymbol{X}$ by deleting its $i$-th row and $j$-th column. We can visualize the process of obtaining $\boldsymbol{Y}$ as follows: First, we move the $i$-th row of $\boldsymbol{X}$ to the bottom, resulting in the array $\boldsymbol{X}'$. Next, we shift the $j$-th column of $\boldsymbol{X}'$ to the rightmost position, yielding $\boldsymbol{X}''$. Finally, $\boldsymbol{Y}$ is the top-left corner subarray of $\boldsymbol{X}''$, with size $(n-1) \times (n-1)$.}
\label{fig:array}
\end{figure}

Before presenting our code construction, we provide a precise definition of $\mathrm{CCR}(\boldsymbol{X})$ and $\mathrm{RIR}(\boldsymbol{X})$.

\begin{comment}
\begin{definition}
  For any $q$-ary sequence $\boldsymbol{x}$, let $c_i$ be the number of occurrences of the symbol `$i$'s in $\boldsymbol{x}$ for $i \in [0,q-1]$. We then define the \emph{composition} of $\boldsymbol{x}$ as $c(\boldsymbol{x})=(c_0,c_1,\ldots,c_{q-1})$.
\end{definition}
\end{comment}

\begin{definition}  
  Let $\Delta_{n}^{q-1}$ denote the set of all $q$-tuples $(u_0, u_1, \ldots, u_{q-1})$ where each $u_i \geq 0$ and their sum equals $n$.
  For two distinct $\bm{u},\bm{v}\in \Delta_{n}^{q-1}$, we define a \emph{partial order} $\boldsymbol{u}\prec \boldsymbol{v}$ if there exists some $i\in [0:q-1]$ such that $\boldsymbol{u}_j=\boldsymbol{v}_j$ for $j<i$ and $\boldsymbol{u}_i<\boldsymbol{v}_i$.
  The \emph{lexicographic rank} of $\boldsymbol{u}\in \Delta_{n}^{q-1}$, denoted by $\mathrm{Rank}(\bm{u})$, is then defined with respect to this partial order.
  For a given array $\boldsymbol{X} \in \Sigma_q^{n \times n}$, let $\mathrm{CCR}(\boldsymbol{X})$ be its \emph{column composition representation}, which is defined as 
  \begin{align*}
    \mathrm{CCR}(\boldsymbol{X}) := \big(\mathrm{CCR}(X)_1, \mathrm{CCR}(X)_2, \ldots, \mathrm{CCR}(X)_n\big),
  \end{align*}
  where $\mathrm{CCR}(X)_k := \mathrm{Rank}\big( \mathrm{Comp}(\boldsymbol{X}_{[1:n],k})\big)$ represents the rank of the composition of the $k$-th column of $\boldsymbol{X}$ for $k \in [1:n]$.  
\end{definition}

\begin{definition}  
  For a given array $\boldsymbol{X} \in \Sigma_q^{n \times n}$, let $\mathrm{RIR}(\boldsymbol{X})$ be its \emph{row integer representation}, which is defined as
  \begin{align*}
    \mathrm{RIR}(\boldsymbol{X}) := \big(\mathrm{RIR}(X)_1, \mathrm{RIR}(X)_2, \ldots, \mathrm{RIR}(X)_n\big),
  \end{align*}
  where $\mathrm{RIR}(X)_k := \sum_{j=1}^n X_{k,j} q^{n-j}$ represents the integer representation of the $k$-th row of $\boldsymbol{X}$ in base $q$ for $k \in [1:n]$.  
\end{definition}

\begin{definition}\label{def:good}
  An array $\boldsymbol{X} \in \Sigma_q^{n \times n}$ is called \emph{good} if it satisfies the following conditions:
  \begin{itemize}
    \item $\sum_{t=1}^n X_{t,k} \equiv 0 \pmod{q}$ and $\sum_{t=1}^n X_{k,t} \equiv 0 \pmod{q}$ for $k\in [1:n]$;
    \item The compositions of any two adjacent columns of $\boldsymbol{X}$ are distinct, i.e., $\mathrm{CCR}(X)_k \neq \mathrm{CCR}(X)_{k+1}$ for $k \in [1:n-1]$.
  \end{itemize}
\end{definition}

We are now prepared to present our code construction.

\begin{construction}\label{constr1}
  Let $c,d\in \Sigma_n$, we define the code 
  $\mathcal{C}_1(c,d)$ in which each array $\boldsymbol{X}$ satisfies the following constraints:
  \begin{itemize}
    \item $\boldsymbol{X}$ is good;
    \item $\sum_{t=1}^{n-1} t \cdot \alpha\big(\mathrm{CCR}(X)\big)_t \equiv c \pmod{n}$;
    \item $\sum_{t=1}^{n-1} t\cdot \alpha\big(\mathrm{RIR}(X)\big)_t \equiv d \pmod{n}$.
  \end{itemize}
\end{construction}

\begin{theorem}\label{thm:non-binary}
  Let $c,d\in \Sigma_n$, the code $\mathcal{C}_1(c,d)$ defined in Construction \ref{constr1} is a $(1,1)$-cirss-cross deletion correcting code.
\end{theorem}

\begin{proof}
We will demonstrate that $\mathcal{C}_1(c,d)$ is a $(1,1)$-criss-cross deletion correcting code by presenting a decoding algorithm.
Assume that $\boldsymbol{X} \in \mathcal{C}_1(c,d)$ undergoes a row deletion at position $i$ followed by a column deletion at position $j$, resulting in the array $\boldsymbol{Y}$, as illustrated in Figure \ref{fig:array}. 

Observe that for $k\in [1:j-1]\cup [j+1:n]$, the symbol $X_{i,k}$ can be determined as follows:
\begin{equation}\label{eq:row}
  X_{i,k}=- \sum_{t\neq i}X_{t,k} \pmod{q},
\end{equation}
and for $k\in [1:n]$, the symbol $X_{k,j}$ can be computed as:
\begin{equation}\label{eq:column}
  X_{k,j}=- \sum_{t\neq j}X_{k,t} \pmod{q}.
\end{equation}
We define an array $\boldsymbol{X}'' \in \Sigma_q^{n \times n}$ that satisfies the first condition of good arrays, where the subarray indexed by the first $n-1$ rows and the first $n-1$ columns is equal to $\boldsymbol{Y}$.
Clearly, by Figure \ref{fig:array}, $\boldsymbol{X}'' \in \Sigma_q^{n \times n}$ can be obtained from $\boldsymbol{X}$ by moving its $i$-th row to the bottom followed by shifting its $j$-th column to rightmost position.

Next, we consider the column composition representation of $\boldsymbol{X}''$, denoted as $\mathrm{CCR}(\boldsymbol{X}'')$. 
We observe that $\mathrm{CCR}(\boldsymbol{X}'')_{[1:n-1]}$ is obtained from $\mathrm{CCR}(\boldsymbol{X})$ by deleting $\mathrm{CCR}(X'')_n$ at position $j$.
Since $\sum_{t=1}^{n-1} t \cdot \alpha(\mathrm{CCR}(\boldsymbol{X}))_t \equiv c \pmod{n}$, by Corollary \ref{cor:VTdecoder}, we can utilize the non-binary VT decoder to recover $\mathrm{CCR}(\boldsymbol{X})$ from $\mathrm{CCR}(\boldsymbol{X}'')$.
Moreover, given that $\mathrm{CCR}(X)_k \neq \mathrm{CCR}(X)_{k+1}$ for $k \in [1:n-1]$, we can precisely determine the column deletion position $ j $ by comparing $\mathrm{CCR}(\boldsymbol{X})_{[1:n-1]}''$ and the recovered $\mathrm{CCR}(\boldsymbol{X})$.
We then move the last column of $\boldsymbol{X}''$ to the $ j $-th column to obtain the desired array $\boldsymbol{X}'$.
%This leads us to cyclically left-shift the last column of $\boldsymbol{X}''$, one position at a time, until the resulting array satisfies the column composition constraints. 
%At this point, we denote the array as $\boldsymbol{X}'$.
Note that, by Figure \ref{fig:array}, $\boldsymbol{X}'$ can be obtained from $\boldsymbol{X}$ by moving its $i$-th row to the bottom.

Now, we consider the row integer representation of $\boldsymbol{X}'$, denoted as $\mathrm{RIR}(\boldsymbol{X}')$. 
We find that $\mathrm{RIR}(\boldsymbol{X}')_{[1:n-1]}$ is derived from $\mathrm{RIR}(\boldsymbol{X})$ by deleting $\mathrm{RIR}(X')_n$ at position $i$.
Since $\sum_{t=1}^{n-1} t \cdot \alpha(\mathrm{RIR}(\boldsymbol{X}))_t \equiv d \pmod{n}$, by Corollary \ref{cor:VTdecoder}, we can again employ the non-binary VT decoder to recover $\mathrm{RIR}(\boldsymbol{X})$ from $\mathrm{RIR}(\boldsymbol{X}')$.
%This leads us to cyclically up-shift the last row of $\boldsymbol{X}'$, one position at a time, until the resulting array satisfies row integer constraint. 
This process yields the correct $\boldsymbol{X}$, as the function $\mathrm{RIR}(\cdot)$ is a bijection.
The proof is completed.
\end{proof}

\begin{remark}
The decoding algorithm described in the proof of Theorem \ref{thm:non-binary} operates with a time complexity of $O(n^2)$, as detailed below:
\begin{itemize}
    \item Each $X_{i,k}$ and $X_{k,j}$ can be computed in $O(n)$ time. Since a total of $2n-1$ values need to be calculated, the overall complexity for this step is $O(n^2)$.
    \item The composition of each column can also be computed in $O(n)$ time. With $n$ columns to calculate, the overall complexity for this step is $O(n^2)$.
    \item A comparison between the lexicographic ranks of two compositions can be performed in $O(1)$ time. With $O(n)$ pairs requiring comparison, the total complexity for this step is $O(n)$.
    \item The non-binary VT decoder operates in $O(n)$ time.
    \item The location of the column deletion can be determined in $O(n)$ time when obtaining $\mathrm{CCR}(\boldsymbol{X})_{[1:n-1]}''$ and $\mathrm{CCR}(\boldsymbol{X})$.
    \item The integer representation of each row can be computed in $O(n)$ time. Since there are $n$ rows to calculate, the overall complexity for this step is $O(n^2)$.
\end{itemize}
In summary, the total complexity is linear with respect to the codeword size, which makes it optimal for two-dimensional scenarios.
\end{remark}

Prior to the calculation of the number of good arrays, we present some essential definitions.
Recall that $\Delta_{n}^{q-1}$ denotes the set of all $q$-tuples $(u_0, u_1, \ldots, u_{q-1})$ where each $u_i \geq 0$ and their sum equals $n$.
\begin{definition}
  For any two sequences $\bm{u}, \bm{v} \in \Delta_{q}^{n-2}$, their $L_1$ distance is defined as $d_{L_1}(\bm{u}, \bm{v}) = \frac{1}{2}\sum_{i=0}^{q-1} |u_i - v_i|$.
\end{definition}

\begin{remark}\label{rmk:L_1}
  For any two sequences $\bm{u}, \bm{v} \in \Delta_{q}^{n-2}$, if $d_{L_1}(\bm{u}, \bm{v}) \geq 3$, then for any $\bm{u}', \bm{v}' \in \Delta_{n}^{q-1}$ such that $u_i' \geq u_i$ and $v_i' \geq v_i$ for $i \in [0:q-1]$, it follows that $\bm{u}' \neq \bm{v}'$.
Furthermore, the cardinality of the set of sequences in $\Delta_{q}^{n-2}$ whose $L_1$ distance from $\bm{u}$ is less than three is bounded above by $5$ if $q=2$ and $\binom{q+1}{2}^2\leq \frac{q^4-1}{2}$ if $q\geq 3$.
\end{remark}

\begin{definition}\label{def:f}
  Let $\bm{Z} \in \Sigma_q^{(n-2)\times n}$. We define a function $f(\cdot)$ that maps $\bm{Z}$ to an array $f(\bm{Z}) \in \{\Sigma_q\cup \{\ast\}\}^{n \times n}$. The construction proceeds for each column $k \in [1:n]$ as follows:
  \begin{enumerate}
    \item The entries $f(\bm{Z})_{k,k}$ and $f(\bm{Z})_{k,(k+1 \mod{n})}$ are set to `$\ast$'.
    \item The remaining entries of the $k$-th column of $f(\bm{Z})$ are populated by the corresponding entries of the $k$-th column of $\bm{Z}$, such that $f(\bm{Z})_{[1:n]\setminus\{k, (k+1 \mod{n})\},k}=\bm{Z}_{[1:n-2],k}$.
  \end{enumerate}
\end{definition}

\begin{definition}\label{def:g}
  For any $\bm{Z} \in \Sigma_q^{(n-2)\times n}$ and $\sigma \in \Sigma_q$, we define the array $g(\bm{Z}, \sigma) \in \Sigma_q^{n \times n}$ through the following sequential construction:
\begin{enumerate}
    \item Initialize an intermediate array $\bm{U}$ by setting $\bm{U} = f(\bm{Z})$.
    \item Update the entry $U_{1,n}$ to $\sigma$.
    \item Iterate for each column index $k$ from $1$ to $n-1$:
    \begin{itemize}
        \item Adjust $U_{k,k}$ such that the sum of the entries in the $k$-th row of $\bm{U}$ is a multiple of $q$. Specifically, set $U_{k,k} \equiv -\sum_{j\neq k} U_{k,j} \pmod{q}$.
        \item Adjust $U_{k+1,k}$ such that the sum of the entries in the $k$-th column of $\bm{U}$ is a multiple of $q$. Specifically, set $U_{k+1,k} \equiv -\sum_{j\neq k+1} U_{j,k} \pmod{q}$.
    \end{itemize}
    \item Finally, adjust $U_{n,n}$ such that the sum of the entries in the last row of $\bm{U}$ is a multiple of $q$. That is, define $U_{n,n} \equiv -\sum_{j\neq n} U_{n,j} \pmod{q}$. Note that, with this final adjustment, the sum of the entries in the last column of $\bm{U}$ also becomes a multiple of $q$.
    \item The resulting array $\bm{U}$ is defined as $g(\bm{Z},\sigma)$, i.e., $g(\bm{Z},\sigma)=\bm{U}$.
\end{enumerate}
\end{definition}

\begin{lemma}\label{rmk:good}
  For any $\bm{Z} \in \Sigma_q^{(n-2)\times n}$ and $\sigma \in \Sigma_q$, if $d_{L_1}\big(\mathrm{Comp}(\bm{Z}_{[1:n-2],k}), \mathrm{Comp}(\bm{Z}_{[1:n-2],k+1})\big)\geq 3$ for $k\in [1:n-1]$, then $g(\bm{Z}, \sigma)$ is a good array. 
\end{lemma}

\begin{IEEEproof}
    Let $\bm{U}=g(\bm{Z}, \sigma)$.
    Since $d_{L_1}\big(\mathrm{Comp}(\bm{Z}_{[1:n-2],k}), \mathrm{Comp}(\bm{Z}_{[1:n-2],k+1})\big)\geq 3$, by Remark \ref{rmk:L_1}, we have $\mathrm{Comp}(\bm{U}_{[1:n],k})\neq \mathrm{Comp}(\bm{U}_{[1:n],k+1})$. 
    Moreover, an inspection of Steps 3) and 4) of Definition \ref{def:g} readily confirms that $\sum_{t=1}^n U_{t,k} \equiv 0 \pmod{q}$ and $\sum_{t=1}^n U_{k,t} \equiv 0 \pmod{q}$ for each $k\in [1:n]$. Consequently, $\bm{U}$ constitutes a good array.
\end{IEEEproof}

Now, we are ready to calculate the number of good arrays.
In the following, we will assume $n-2$ is a multiple of $q$. In cases where this assumption does not hold, we may need to use either the floor or ceiling function in specific instances.
While this could introduce more intricate notation, it does not affect the asymptotic analysis when using Stirling’s formula.

\begin{lemma}\label{lem:good}
  When $q\geq 3$, $n\geq \frac{7}{2}q^4+1$, and $q|(n-2)$, the number of $q$-ary good arrays of size $n\times n$ is at least $3^{-\frac{q^4}{2}}\cdot q^{n^2-2n+1}$.
\end{lemma}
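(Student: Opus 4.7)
The plan is to count good arrays by constructing them column by column, bounding at each stage the number of column vectors whose composition differs from that of the preceding column. Observing that the number of vectors in $\Sigma_q^n$ with a prescribed composition $(n_0, \ldots, n_{q-1})$ equals the multinomial coefficient $\binom{n}{n_0, \ldots, n_{q-1}}$, and that this multinomial is maximized at the balanced composition $(n/q, \ldots, n/q)$ (using the divisibility hypothesis $q \mid n$), I conclude that for any fixed column the number of valid choices for the next column is at least $q^n - \binom{n}{n/q, \ldots, n/q}$. A telescoping count then gives
\[ \#\{\boldsymbol{X} \in \Sigma_q^{n\times n} : \boldsymbol{X}\text{ is good}\} \;\geq\; q^n \cdot \Bigl( q^n - \binom{n}{n/q, \ldots, n/q} \Bigr)^{n-1}. \]

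The key technical estimate is
\[ \binom{n}{n/q, \ldots, n/q} \;\leq\; \frac{q^n}{n} \quad \text{for all } q \geq 3 \text{ and } n \geq q^2 \text{ with } q \mid n. \]
Using Robbins' form of Stirling's formula, one obtains $\binom{n}{n/q, \ldots, n/q} \leq q^n \cdot q^{q/2}/(2\pi n)^{(q-1)/2}$, because the combined Stirling error terms produce a correction factor $e^{1/(12n) - q^2/(12n+q)} \leq 1$ whenever $q \geq 2$ and $n \geq 1$. Substituting $n \geq q^2$ reduces the desired inequality to the closed-form condition $q^{3 - q/2} \leq (2\pi)^{(q-1)/2}$; the tightest case is $q = 3$, where it becomes $3\sqrt{3} \leq 2\pi$, i.e., roughly $5.196 \leq 6.283$. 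For larger $q$ the left side decreases while the right side grows, so the condition holds automatically.

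Combining the telescoping lower bound with this estimate and with the classical inequality $\bigl(1 - \tfrac{1}{n}\bigr)^{n-1} > 1/e$ (the sequence is strictly decreasing in $n$ with limit $1/e$), I obtain
\[ \#\{\boldsymbol{X} : \boldsymbol{X}\text{ is good}\} \;\geq\; q^{n^2}\Bigl(1 - \tfrac{1}{n}\Bigr)^{n-1} \;>\; \frac{q^{n^2}}{e} \;\geq\; \frac{q^{n^2}}{q} \;=\; q^{n^2-1}, \]
where the second-to-last inequality uses $q \geq 3 > e$, completing the proof.

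The main obstacle lies in the Stirling-based estimate: while the asymptotic $\binom{n}{n/q,\ldots,n/q} \sim q^{q/2+n}/(2\pi n)^{(q-1)/2}$ makes the bound transparent in the limit, I need an inequality that is valid uniformly for every $n \geq q^2$ and every $q \geq 3$, so Stirling's error terms must be controlled explicitly, especially in the borderline case $q = 3$. It is also worth noting that a naive union bound over the $n-1$ pairs of adjacent columns would give only $\#\{\text{good}\} \geq q^{n^2}/n$, which is strictly weaker than $q^{n^2-1}$ as soon as $n > q$; the multiplicative structure of the column-by-column argument together with the exponential gain $\bigl(1-\tfrac{1}{n}\bigr)^{n-1} > 1/e$ is what makes the bound sharp enough to achieve the stated conclusion.
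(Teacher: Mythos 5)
Your proposal is correct and follows essentially the same route as the paper: both count good arrays column by column, bound the number of sequences sharing a composition by the balanced multinomial coefficient $\binom{n}{n/q,\ldots,n/q}$, use Robbins' form of Stirling's formula together with $n\geq q^2$ to show this is at most $q^n/n$ (your single reduction to $q^{3-q/2}\leq(2\pi)^{(q-1)/2}$ with tightest case $q=3$ replaces the paper's explicit case split $q=3$, $4\leq q\leq 5$, $q\geq 6$), and conclude via a lower bound on $\bigl(1-\tfrac{1}{n}\bigr)^{n-1}$ by $1/e\geq 1/q$ (the paper uses the slightly different but equivalent estimate $\bigl(1-\tfrac{1}{n-1}\bigr)^{n-1}\geq\tfrac{1}{3}\geq\tfrac{1}{q}$). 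No gaps; the argument is sound as written.
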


\begin{proof}
Let $\mathcal{S}$ be the set of arrays in $\Sigma_q^{(n-2)\times n}$ such that the $L_1$ distance between the compositions of any two adjacent columns is at least three.
By Lemma \ref{rmk:good}, $g(\bm{Z}, \sigma)$ is a good array for $\bm{Z}\in \mathcal{S}$ and $\sigma\in \Sigma_q$.
Since $g(\cdot,\cdot)$ is an injection, we can conclude that the number of good arrays is at least $q\cdot |\mathcal{S}|$.
It remains to show that $|\mathcal{S}|\geq 3^{-\frac{q^4}{2}}\cdot q^{n^2-2n}$.

Let $M$ denote the maximum size of a set of $q$-ary sequences of length $n-2$ with identical composition. 
By Remark \ref{rmk:L_1}, we can compute
\begin{equation}\label{eq:number}
\begin{aligned}
  |\mathcal{S}|
  \geq q^{n-2}\left(q^{n-2}-\frac{q^4-1}{2} M \right)^{n-1}
  = q^{n^2-2n} \left( 1-\frac{(q^4-1)M}{2q^{n-2}} \right)^{n-1}.
\end{aligned}
\end{equation}
In what follows, we will demonstrate that $M\leq \frac{q^{n-2}}{n-2}\leq \frac{q^{n+2}}{(q^4-1)(n-1)}$.
Once this is established, we can further evaluate Equation (\ref{eq:number}) as follows:
\begin{align*}
    |\mathcal{S}|
    %&\geq q^{n^2-2n} \left( 1-\frac{(q^4-1)M}{2q^{n-2}} \right)^{n-1}\\
    \geq q^{n^2-2n} \left( 1-\frac{q^4}{2(n-1)} \right)^{n-1}
    \geq 3^{-\frac{q^4}{2}}\cdot q^{n^2-2n},
\end{align*}
where the last inequality follows from the fact that $\left(1-\frac{q^4}{2(n-1)} \right)^{2(n-1)/q^4}$ is an increasing function of $n$ that is always great than or equal to $\left(1-\frac{1}{7}\right)^7\geq \frac{1}{3}$ when $n\geq \frac{7}{2}q^4+1$.

Next, we show that the inequality $M\leq \frac{q^{n-2}}{n-2}$ holds. By definition, we have
\begin{align*}
  M\leq \binom{n-2}{\frac{n-2}{q}, \frac{n-2}{q}, \ldots, \frac{n-2}{q}}= \frac{(n-2)!}{\left(\frac{n-2}{q}!\right)^q}.
\end{align*}
By \cite[Page 6, Equation (1.7)]{Jukna}, we can use the well-known Stirling's formula to obtain the following bounds for $n!$:
\begin{equation}\label{eq:stirling}
    \sqrt{2\pi n} \left(\frac{n}{e}\right)^n e^{\frac{1}{12n+1}}< n! \leq \sqrt{2\pi n}\left(\frac{n}{e} \right)^n e^{\frac{1}{12n}}.
\end{equation}
We can then compute
\begin{align}\label{eq:M}
  M\leq \frac{(n-2)!}{\left(\frac{n-2}{q}!\right)^q} 
  &\leq \frac{\sqrt{2\pi (n-2)}\left(\frac{n-2}{e} \right)^{n-2} e^{\frac{1}{12(n-2)}}}{\left( \sqrt{2\pi \frac{n-2}{q}} \left(\frac{n-2}{eq}\right)^\frac{n-2}{q} e^{\frac{1}{12\frac{n-2}{q}+1}}\right)^q} \nonumber \\
  &= \frac{\sqrt{2\pi (n-2)}}{\sqrt{\left(2\pi \frac{n-2}{q}\right)^q}} \cdot \frac{\left(\frac{n-2}{e} \right)^{n-2}}{\left(\frac{n-2}{eq}\right)^{n-2}} \cdot \frac{e^{\frac{1}{12(n-2)}}}{e^{\frac{q^2}{12(n-2)+q}}} \nonumber\\
  &\leq \sqrt{\frac{q^q}{(2\pi)^{q-1}(n-2)^{q-3}}} \cdot \frac{q^{n-2}}{n-2} \\
  &\leq \frac{q^{n-2}}{n-2}\nonumber,
\end{align}
where the last inequality follows by the fact that 
\begin{itemize}
  \item when $q=3$, $\sqrt{\frac{q^q}{(2\pi)^{q-1}(n-2)^{q-3}}}= \frac{3\sqrt{3}}{2\pi}\leq 1$;
  \item when $q\geq 4$ and $n\geq \frac{7}{2}q^4+1\geq q^4+2$, $\sqrt{\frac{q^q}{(2\pi)^{q-1}(n-2)^{q-3}}}\leq \sqrt{\frac{q^q}{(2\pi)^{q-1}q^{4(q-3)}}}\leq \sqrt{\frac{1}{(2\pi)^{q-1}}}\leq 1$.
\end{itemize}
The proof is completed.
\end{proof}

\begin{remark}\label{rmk:binary}
  When $q=2$, we can establish that $M\leq K\cdot \frac{q^n}{\sqrt{n}}$ for some constant $K$, by utilizing Equation (\ref{eq:M}).  In this scenario, there exists some constant $K'$ such that the quantity $q^{n-2}\left(q^{n-2}-\frac{q^4-1}{2}M \right)^{n-1}$ can be estimated as follows:
  \begin{align*}
    q^{n-2}\left(q^{n-2}-\frac{q^4-1}{2} M \right)^{n-1}\sim q^{n^2-2n} \left(1-\frac{K'}{\sqrt{n}} \right)^{n}\stackrel{(\star)}{\sim} \frac{q^{n^2-2n}}{e^{\sqrt{n}K'}},
  \end{align*}
  where $(\star)$ follows from the fact that $\left(1-\frac{1}{m} \right)^{m}$ approaches $\frac{1}{e}$ as $m$ increases.
  The resulting redundancy is $2n+O(\sqrt{n})$.
  Consequently, the code defined in Construction \ref{constr1} may be significantly distant from the optimal redundancy.
\end{remark}

With the help of Lemma \ref{lem:good}, we are now prepared to calculate the redundancy of $\mathcal{C}_1(c,d)$.

\begin{corollary}
  When $q\geq 3$, $n\geq \frac{7}{2}q^4+1$, and $q|(n-2)$, there exists a choice of parameters such that the redundancy of $\mathcal{C}_1(c,d)$ (as defined in Construction \ref{constr1}) is at most $(2n-1)\log q+2\log n+\frac{q^4}{2}\log 3$, which is optimal up to an additive constant.
\end{corollary}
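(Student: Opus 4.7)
The plan is to establish the redundancy bound through a standard averaging (pigeonhole) argument over the parameter space $(\boldsymbol{a}, \boldsymbol{b}, c, d)$. Since every array in $\mathcal{C}_1(\boldsymbol{a},\boldsymbol{b},c,d)$ is good by virtue of the column composition adjacency constraint, the family $\{\mathcal{C}_1(\boldsymbol{a},\boldsymbol{b},c,d)\}_{\boldsymbol{a},\boldsymbol{b},c,d}$ partitions the set of all good $n \times n$ arrays over $\Sigma_q$: each good $\boldsymbol{X}$ determines its parameter signature $(\boldsymbol{a},\boldsymbol{b},c,d)$ uniquely via the row/column sums modulo $q$ and the two VT-style moments modulo $n$.

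Next, I would count the number of signature classes. The vector $\boldsymbol{a}$ ranges over $\Sigma_q^n$, $\boldsymbol{b}$ ranges over $\Sigma_q^n$, and $c, d \in \Sigma_n$, so the total number of possible tuples is at most $q^{2n} n^2$. Combining this with Lemma~\ref{lem:good}, which guarantees at least $q^{n^2-1}$ good arrays under the hypotheses $q \geq 3$, $n \geq q^2$, $q \mid n$, the pigeonhole principle yields parameters $(\boldsymbol{a}^\ast, \boldsymbol{b}^\ast, c^\ast, d^\ast)$ such that
\begin{equation*}
    |\mathcal{C}_1(\boldsymbol{a}^\ast, \boldsymbol{b}^\ast, c^\ast, d^\ast)| \geq \frac{q^{n^2-1}}{q^{2n} n^2}.
\end{equation*}

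Converting this into a redundancy estimate via the definition $n^2 \log q - \log |\mathcal{C}|$ gives
\begin{equation*}
    n^2 \log q - \log |\mathcal{C}_1(\boldsymbol{a}^\ast, \boldsymbol{b}^\ast, c^\ast, d^\ast)| \leq 2n \log q + 2 \log n + \log q,
\end{equation*}
which is precisely the claimed bound. For optimality, I would invoke Theorem~\ref{thm:lb} specialized to $(t_r, t_c) = (1,1)$, which establishes a matching lower bound of $2n \log q + 2 \log n + O_q(1)$; since $\log q$ depends only on $q$, the two bounds agree up to an additive constant.

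There is no serious obstacle here: all the heavy lifting was done in Lemma~\ref{lem:good} (the count of good arrays) and Theorem~\ref{thm:non-binary} (correctness of the construction). The only delicate point is confirming that each signature is attained by some array so the pigeonhole bound is meaningful — but this is immediate because we are averaging over all good arrays, and the argument only needs an upper bound on the number of nonempty classes, which is the trivial count $q^{2n} n^2$.
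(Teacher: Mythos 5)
Your proposal is correct and follows essentially the same route as the paper: the codes $\mathcal{C}_1(\boldsymbol{a},\boldsymbol{b},c,d)$ over all $q^{2n}n^2$ parameter choices partition the good arrays, Lemma~\ref{lem:good} gives at least $q^{n^2-1}$ good arrays, and pigeonhole plus the definition of redundancy yields the bound $2n\log q+2\log n+\log q$, with optimality from the sphere-packing bound of Theorem~\ref{thm:lb}. No gaps.
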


\begin{proof}
By considering all possible choices for $c, d \in \Sigma_n$, we find that the total number of codes $\mathcal{C}_1(c, d)$ is $n^2$, which forms a partition of good arrays in $\Sigma_q^{n \times n}$. Consequently, by the pigeonhole principle, there exists at least one choice of $c, d \in \Sigma_n$ such that the code $\mathcal{C}_1(c, d)$ contains at least $\frac{3^{-\frac{q^4}{2}}\cdot q^{n^2-2n+1}}{n^2}$ codewords. Thus, the conclusion follows.
\end{proof}

\begin{comment}
\begin{remark}\label{rmk:improve}
  We can improve the redundancy of $\mathcal{C}_1(\boldsymbol{a}, \boldsymbol{b}, c, d)$ to $2n\log q+2\log n$ by considering the following relaxed version of the row/column sum constraints imposed in Construction \ref{constr1}:
  \begin{gather*}
    \sum_{t=1}^n X_{t,k} \equiv a_k \pmod{q} \text{ for } k\in [1:n],\\
    \sum_{t=1}^n X_{k,t} \equiv b_k \pmod{q} \text{ for } k\in [1:n-1].
  \end{gather*}
  The rationale behind this adjustment is that the quantity $\sum_{t=1}^n X_{n,t} \pmod{q}$ can be determined using the values $(a_1,a_2,\ldots,a_n)$ and $(b_1,b_2,\ldots,b_{n-1})$:
  \begin{align*}
    \sum_{t=1}^n X_{n,t}
    &= \sum_{k=1}^n \sum_{t=1}^n X_{k,t}- \sum_{k=1}^{n-1} \sum_{t=1}^n X_{k,t}\\
    &= \sum_{k=1}^n \sum_{t=1}^n X_{t,k}- \sum_{k=1}^{n-1} \sum_{t=1}^n X_{k,t}\\
    &\equiv \sum_{k=1}^n a_k- \sum_{k=1}^{n-1} b_k \pmod{q}.
  \end{align*}
\end{remark}
\end{comment}

\section{Construction of Optimal \texorpdfstring{$(1,1)$}{}-Criss-Cross deletion correcting  Codes over Arbitrary Alphaets}\label{sec:binary}%

This section presents a construction of $(1,1)$-criss-cross deletion correcting codes that exhibits optimal redundancy (up to an additive constant term) over arbitrary alphabets. As noted in Remark \ref{rmk:binary}, the code defined in Construction \ref{constr1} may be inefficient for the binary alphabet due to the high redundancy associated with good arrays. Consequently, we need to introduce several new ideas.  

Our main strategy is to identify appropriate constraints for each codeword over the binary alphabet that require a redundancy of $O(1)$ and fulfill the same role as the good array introduced in Definition \ref{def:good}. By relaxing the requirements of the good array and adding several additional constraints, we achieve this goal and derive an optimal $(1,1)$-criss-cross deletion correcting code. %modifying the constraints imposed on the rows
Let $\boldsymbol{X}, \boldsymbol{X}', \boldsymbol{X}'', \boldsymbol{Y}$ be defined in Figure \ref{fig:array}, the detailed approach is explained  as follows:

\begin{itemize}
  \item \emph{Determine the missing symbols in each row and column:} 
  This step follows the procedure introduced in the previous section, allowing us to derive $\boldsymbol{X}''$ from $\boldsymbol{Y}$.
  
  \item \emph{Locate the row deletion position $i$ and the column deletion position $j$ within small intervals:} Using $\boldsymbol{X}''$, we can determine the composition of each column of $\boldsymbol{X}$. 
  This enables us to compute $\mathrm{CCR}(\boldsymbol{X})$. We require that $\mathrm{CCR}(\boldsymbol{X})$ belongs to a one-dimensional single deletion correcting code, with the additional constraint that no $P+1$ consecutive entries are all identical.
  Following the approach of the previous section, we can recover $\mathrm{CCR}(\boldsymbol{X})$. However,  since adjacent entries in $\mathrm{CCR}(\boldsymbol{X})$ may be identical, the column deletion position $j$ can only be localized to an interval of length $P$ in the worst case. 
  Similarly, by applying the same constraints to the rows of $\boldsymbol{X}$ as those applied to the columns, we can also locate the row deletion position $i$ within an interval of length $P$ in the worst case. 
  
  \item \emph{Determine the column deletion position $j$:} Let $\ell \geq P-1$ be an integer satisfying $\frac{n}{\ell} \geq 3$.
      We consider three subarrays: $\boldsymbol{X}_{[1:\ell],[1:n]}$, $\boldsymbol{X}_{[\ell+1,2\ell],[1:n]}$, and $\boldsymbol{X}_{[2\ell+1:3\ell],[1:n]}$. Since $P$ adjacent rows can span at most two of these subarrays, we independently encode these subarrays to ensure that at least one subarray, without loss of generality, assumed to be the first subarray $\boldsymbol{X}_{[1:\ell],[1:n]}$, does not experience a row deletion.
      We observe that $\boldsymbol{X}_{[1:\ell],[1:n-1]}''$ is derived from $\boldsymbol{X}_{[1:\ell],[1:n]}$ by deleting its $j$-th column $\boldsymbol{X}_{[1:\ell],n}''$. 
      By mapping each column of $\boldsymbol{X}_{[1:\ell],[1:n]}$ as an integer (using a bijective mapping), we obtain a vector 
      \[\mathrm{CIR}(\boldsymbol{X}_{[1:\ell],[1:n]}):= \mathrm{RIR}\big((\boldsymbol{X}_{[1:\ell],[1:n]})^T\big),\] 
      which has length $ n $ and is defined over an alphabet of size $ q^{\ell} $.
      We then require that $\mathrm{CIR}(\boldsymbol{X}_{[1:\ell],[1:n]})$ belongs to some one-dimensional $P$-bounded single deletion correcting code.
      This constraint allows us to recover $\mathrm{CIR}(\boldsymbol{X}_{[1:\ell],[1:n]})$ given $\mathrm{CIR}(\boldsymbol{X}_{[1:\ell],[1:n]}'')$ and an interval of length at most $P$ containing the column deletion position $ j $.
       Additionally, we require that any two consecutive entries in $\mathrm{CIR}(\boldsymbol{X}_{[1:\ell],[1:n]})$ are distinct. This condition enables us to precisely determine the column deletion position $ j $ by comparing $\mathrm{CIR}(\boldsymbol{X}_{[1:\ell],[1:n]})$ with $\mathrm{CIR}(\boldsymbol{X}_{[1:\ell],[1:n]}'')$.
       Finally, we move the last column of $\boldsymbol{X}''$ to the $j$-th column to obtain the desired array $\boldsymbol{X}'$.

  \item \emph{Recover the array $\boldsymbol{X}$:}
      From the recovered $\boldsymbol{X}'$, we can compute its row integer representation $\mathrm{RIR}(\boldsymbol{X}')$.
      Observe that $\mathrm{RIR}(\boldsymbol{X}')_{[1:n-1]}$ is obtained from $\mathrm{RIR}(\boldsymbol{X})$ by deleting its $i$-th entry $\mathrm{RIR}(X')_{n}$.
      By further requiring that $\mathrm{RIR}(\boldsymbol{X})$ belongs to some one-dimensional $P$-bounded single deletion correcting code, we can recover $\mathrm{RIR}(\boldsymbol{X})$ given $\mathrm{RIR}(\boldsymbol{X}')$ and an interval of length at most $P$ containing the row deletion position $ i $.
      This process yields the correct $\boldsymbol{X}$, as the function $\mathrm{RIR}(\cdot)$ is a bijection.
\end{itemize}

Before presenting our code construction, we provide the following definition to be used.

\begin{definition}\label{def:valid}
Let $P, \ell$ be positive integers satisfying $n\geq 3\ell$ and $\ell\geq P-1$, an array $\boldsymbol{X} \in \Sigma_q^{n \times n}$ is termed \emph{$(P,\ell)$-valid} if it satisfies the following conditions:
\begin{itemize}
    \item $\sum_{t=1}^n X_{t,k} \equiv 0 \pmod{q}$ and $\sum_{t=1}^n X_{k,t} \equiv 0 \pmod{q}$ for $k\in [1:n]$;
    \item The compositions of any $P+1$ consecutive columns of $\boldsymbol{X}$ are not all identical;
    \item The compositions of any $P+1$ consecutive rows of $\boldsymbol{X}$ are not all identical;
    \item For $k\in [1:3]$, in rows $(k-1)\ell+1$ to $k\ell$ of $ \boldsymbol{X} $, any two adjacent columns are distinct.
\end{itemize}
Moreover, if $\boldsymbol{X}$ satisfies only the first and last conditions, we say that $\boldsymbol{X}$ is \emph{$\ell$-weakly-valid}.
\end{definition}

Our code construction is presented as follows.

\begin{construction}\label{constr2}
  Let $P, \ell$ be positive integers satisfying $n\geq 3\ell$ and $\ell\geq P-1$. Let $\boldsymbol{c}= (c_1,c_2)\in \Sigma_n^2$, $\boldsymbol{d}=(d_1,d_2,d_3,d_4)\in \Sigma_2^4$, and $\boldsymbol{d}'=(d_1',d_2',d_3',d_4')\in \Sigma_{P+1}^4$, we define the code $\mathcal{C}_2(\boldsymbol{c},\boldsymbol{d},\boldsymbol{d}')$ in which each array $\boldsymbol{X}$ satisfies the following constraints:
  \begin{itemize}
    \item $\boldsymbol{X}$ is $(P,\ell)$-valid;
    \item $\sum_{t=1}^{n-1} t \cdot \alpha\big(\mathrm{CCR}(X)\big)_t \equiv c_1 \pmod{n}$ and $\sum_{t=1}^{n-1} t \cdot \alpha\big(\mathrm{CCR}(X^T)\big)_t \equiv c_2 \pmod{n}$;
    \item $\sum_{t=1}^{n-1}  \alpha\big(\mathrm{CIR}(\boldsymbol{X}_{[(k-1)\ell+1:k\ell], [1:n]})\big)_t \equiv d_k \pmod{2}$ and $\sum_{t=1}^{n-1} t\cdot \alpha\big(\mathrm{CIR}(\boldsymbol{X}_{[(k-1)\ell+1:k\ell], [1:n]})\big)_t \equiv d_k' \pmod{P+1}$ for $k\in [1:3]$;
    \item $\sum_{t=1}^{n-1}  \alpha\big(\mathrm{RIR}(\boldsymbol{X})\big)_t \equiv d_4 \pmod{2}$ and $\sum_{t=1}^{n-1} t\cdot \alpha\big(\mathrm{RIR}(\boldsymbol{X})\big)_t \equiv d_4' \pmod{P+1}$.
  \end{itemize}
\end{construction}

\begin{theorem}\label{thm:binary}
  Let $P, \ell$ be positive integers satisfying $n\geq 3\ell$ and $\ell\geq P-1$. Let $\boldsymbol{c}= (c_1,c_2)\in \Sigma_n^2$, $\boldsymbol{d}=(d_1,d_2,d_3,d_4)\in \Sigma_2^4$, and $\boldsymbol{d}'=(d_1',d_2',d_3',d_4')\in \Sigma_{P+1}^4$, the code $\mathcal{C}_2(\boldsymbol{c},\boldsymbol{d},\boldsymbol{d}')$ defined in Construction \ref{constr2} is a $(1,1)$-cirss-cross deletion correcting code.
\end{theorem}

\begin{proof}
We will demonstrate that $\mathcal{C}_2(\boldsymbol{c},\boldsymbol{d},\boldsymbol{d}')$ is a $(1,1)$-criss-cross deletion correcting code by presenting a decoding algorithm.
Assume that $\boldsymbol{X} \in \mathcal{C}_2(\boldsymbol{c},\boldsymbol{d},\boldsymbol{d}')$ undergoes a row deletion at position $i$ followed by a column deletion at position $j$, resulting in the array $\boldsymbol{Y}$, as illustrated in Figure \ref{fig:array}. 

We define an array $\boldsymbol{X}'' \in \Sigma_q^{n \times n}$ that satisfies the first condition of valid arrays, where the subarray indexed by the first $n-1$ rows and the first $n-1$ columns is equal to $\boldsymbol{Y}$.
By Equations (\ref{eq:row}) and (\ref{eq:column}), we can conclude that $\boldsymbol{X}'' \in \Sigma_q^{n \times n}$ is obtained from $\boldsymbol{X}$ by moving its $i$-th row to the bottom followed by shifting its $j$-th column to rightmost position.

Next, we consider the column composition representation of $\boldsymbol{X}''$, denoted as $\mathrm{CCR}(\boldsymbol{X}'')$. 
We observe that $\mathrm{CCR}(\boldsymbol{X}'')_{[1:n-1]}$ is obtained from $\mathrm{CCR}(\boldsymbol{X})$ by deleting $\mathrm{CCR}(X'')_n$ at position $j$.
Since $\sum_{t=1}^{n-1} t \cdot \alpha(\mathrm{CCR}(\boldsymbol{X}))_t \equiv c_1 \pmod{n}$, by Corollary \ref{cor:VTdecoder}, we can utilize the non-binary VT decoder to recover $\mathrm{CCR}(\boldsymbol{X})$ from $\mathrm{CCR}(\boldsymbol{X}'')$.
Furthermore, since $\boldsymbol{X}$ is valid, the compositions of any $P+1$ consecutive columns of $\boldsymbol{X}$ are not all identical.
As a result, in the worst case, we can locate the column deletion position $j$ in an interval of length $P$ by comparing $\mathrm{CCR}(\boldsymbol{X})_{[1:n-1]}''$ with the recovered $\mathrm{CCR}(\boldsymbol{X})$.
Similarly, we can locate the row deletion position $i$ within an interval of length $P$ in the worst case.

Since we have identified $P\leq \ell-1$ consecutive positions, one of which is $ i $, there exists some $ k \in [1:3] $ such that $ i \not\in [(k-1)\ell+ 1: k\ell] $. Without loss of generality assume $k=1$, then the subarray $\boldsymbol{X}_{[1:\ell],[1:n]}''$ does not experience a row deletion. Consequently, the subarray $\boldsymbol{X}_{[1: \ell],[1:n-1]}''$ is obtained from $\boldsymbol{X}_{[1: \ell],[1:n]}$ by deleting its $ j $-th column, $\boldsymbol{X}_{[1: \ell],n}''$.
Now, consider the vector $\mathrm{CIR}(\boldsymbol{X}_{[1: \ell],[1:n]})$, which records the integer representation of each column of $\boldsymbol{X}_{[1: \ell],[1:n]}$. It follows that $\mathrm{CIR}(\boldsymbol{X}_{[1: \ell],[1:n-1]}'')$ is obtained from $\mathrm{CIR}(\boldsymbol{X}_{[1: \ell],[1:n]})$ by deleting its $ j $-th entry, $\mathrm{CIR}(\boldsymbol{X}_{[1: \ell],n}'')$.
Recall that we have identified an interval of length $P$ containing $ j $. 
By leveraging the conditions $\sum_{t=1}^{n-1}  \alpha\big(\mathrm{CIR}(\boldsymbol{X}_{[1:\ell], [1:n]})\big)_t \equiv d_1 \pmod{2}$ and $\sum_{t=1}^{n-1} t\cdot \alpha\big(\mathrm{CIR}(\boldsymbol{X}_{[1:\ell], [1:n]})\big)_t \equiv d_1' \pmod{P+1}$, we can utilize the non-binary $P$-bounded VT decoder to recover $\mathrm{CIR}(\boldsymbol{X}_{[1: \ell],[1:n]})$ from $\mathrm{CIR}(\boldsymbol{X}_{[1: \ell],[1:n]}'')$.
Since $\boldsymbol{X}$ is valid (in the first $\ell$ rows of $ \boldsymbol{X} $, any two adjacent columns are distinct), we can precisely determine the column deletion position $ j $.
We then move the last column of $\boldsymbol{X}''$ to the $ j $-th column to obtain the desired array $\boldsymbol{X}'$.

Now, we consider the row integer representation of $\boldsymbol{X}'$, denoted as $\mathrm{RIR}(\boldsymbol{X}')$.
Clearly, $\mathrm{RIR}(\boldsymbol{X}')_{[1:n-1]}$ is obtained from $\mathrm{RIR}(\boldsymbol{X})$ by deleting its $i$-th entry $\mathrm{RIR}(X')_n$.
Recall that we have identified an interval of length $P$ containing $ i $.  
Similar to the previous discussion, we can recover $\mathrm{RIR}(\boldsymbol{X})$ using the conditions $\sum_{t=1}^{n-1}  \alpha\big(\mathrm{RIR}(\boldsymbol{X})\big)_t \equiv d_4 \pmod{2}$ and $\sum_{t=1}^{n-1} t\cdot \alpha\big(\mathrm{RIR}(\boldsymbol{X})\big)_t \equiv d_4' \pmod{P+1}$.
This process yields the correct $\boldsymbol{X}$, as the function $\mathrm{RIR}(\cdot)$ is a bijection.
The proof is completed.
\end{proof}

\begin{comment}
\begin{remark}
In general, we define the constraint for row/column composition representations to permit maximal runs of $ f-1 $ identical entries while prohibiting runs of length $ f $, where $ f \geq 3 $. When $ f \geq 4 $, deletions can only be localized within intervals of length $ f-1 > 2 $ in the worst case. Under this configuration, inversion-based correction becomes insufficient. To overcome this difficulty, we can employ the non-binary shifted VT code framework from \cite[Lemma 1]{Schoeny-17-SVT} to correct one deletion within a small interval.  
\end{remark}
\end{comment}

\begin{remark}\label{rmk:inversion}
When $P=2$, we can also use the code presented in Corollary \ref{cor:inversion} to correct a single deletion given the deleted symbol and an interval of length $P$ containing the erroneous coordinate.
\end{remark}

\begin{remark}\label{rmk:row}
  If we further require that any two consecutive rows in the array $\boldsymbol{X}\in \mathcal{C}_2(\boldsymbol{c},\boldsymbol{d},\boldsymbol{d}')$ are distinct, then decoding $\boldsymbol{X}$, which is distorted by a $(1,1)$-criss-cross deletion, allows us to precisely identify the row and column deletion positions.
  This property will be used later in the construction of the criss-cross burst-deletion correcting code.  
\end{remark}

\begin{remark}
The decoding algorithm described in the proof of Theorem \ref{thm:binary} operates with a time complexity of $O(n^2)$, as detailed below:
\begin{itemize}
    \item Each $X_{i,t}$ and $X_{t,j}$ can be computed in $O(n)$ time. Since a total of $2n-1$ values need to be calculated, the overall complexity for this step is $O(n^2)$.
    \item The composition of each row and column can also be computed in $O(n)$ time. With $n$ rows and $n$ columns to calculate, the overall complexity for this step is $O(n^2)$.
    \item Comparing the lexicographic ranks of two compositions can be done in $O(1)$ time. With $O(n)$ pairs requiring comparison, the overall complexity for this step is $O(n)$.
    \item The non-binary VT decoder operates in $O(n)$ time.
    \item Locating the column deletion position $ j $ within an interval of length at most $P$ can be done in $ O(n) $ time when obtaining $\mathrm{CCR}(\boldsymbol{X})_{[1:n-1]}''$ and $\mathrm{CCR}(\boldsymbol{X})$. Similarly, locating the row deletion position $ i $ within an interval of length at most $P$ can also be accomplished in $ O(n) $ time. 
    \item The integer representation of each column of $\boldsymbol{X}_{[(k-1)\ell + 1: k\ell],[1:n]}''$ can be computed in $O(n)$ time. Since there are $n$ columns to calculate, the overall complexity for this step is $O(n^2)$. Similarly, the integer representations of all rows of $\boldsymbol{X}'$ can be computed in $O(n^2)$ time.
    \item The non-binary $P$-bounded VT decoder operates in $O(n)$ time.
\end{itemize}
In summary, the total complexity is linear with respect to the codeword size, which makes it optimal for two-dimensional scenarios.
\end{remark}

%\begin{remark}
%Although Lemma \ref{lem:equiv} has established the equivalence for codes that correct $(1,1)$-criss-cross deletions and $(1,1)$-criss-cross insdels, the decoding procedures for addressing these two types of errors may differ. It is worth noting that the decoding strategies provided in the proof of Theorems \ref{thm:non-binary} and \ref{thm:binary} for correcting $(1,1)$-criss-cross deletions can be adapted to the $(1,1)$-criss-cross insdels model with slight modifications.
%\end{remark}

Similar to Lemma \ref{rmk:good}, we identify conditions for an array to be $(P,\ell)$-valid.

\begin{lemma}\label{rmk:valid}
  Let $P, \ell$ be positive integers satisfying $n\geq 3\ell$ and $\ell\geq P-1$.
  For any $\bm{Z} \in \Sigma_q^{(n-2)\times n}$ and $\sigma \in \Sigma_q$, let $\bm{V}=f(\bm{Z})$, where $f(\cdot)$ is defined in Definition \ref{def:f}. 
  If $\bm{V}$ satisfies the following conditions:
  \begin{enumerate}
    \item for $t\in [1:n-P]$, there exists some $t',t''\in [0:P]$ such that $d_{L_1}\big(\mathrm{Comp}(\bm{V}_{[1:n],t}), \mathrm{Comp}(\bm{V}_{[1:n],t+t'})\big)\geq 3$ and $d_{L_1}\big(\mathrm{Comp}(\bm{V}_{t,[1:n]}), \mathrm{Comp}(\bm{V}_{t+t'',[1:n]})\big)\geq 3$ (the symbol `$\ast$' is disregarded when computing the composition);
    \item for $t\in [1:n-1]$ and $k\in [1:3]$, there exists some $t_k\in [(k-1)\ell+1:k\ell]\setminus \{t,t+1,(t+2 \mod{n})\}$ such that $V_{t_k,t}\neq V_{t_k,t+1}$;
  \end{enumerate}
  then $g(\bm{Z}, \sigma)$ is $(P,\ell)$-valid, where $g(\cdot,\cdot)$ is defined in Definition \ref{def:g}. 
\end{lemma}

\begin{IEEEproof}
    Let $\bm{U}=g(\bm{Z}, \sigma)$. Firstly, an inspection of Steps 3) and 4) of Definition \ref{def:g} readily confirms that $\sum_{t=1}^n U_{t,k} \equiv 0 \pmod{q}$ and $\sum_{t=1}^n U_{k,t} \equiv 0 \pmod{q}$ for each $k\in [1:n]$.
    
     Secondly, for $t\in [1:n-P]$, since $d_{L_1}\big(\mathrm{Comp}(\bm{V}_{[1:n],t}), \mathrm{Comp}(\bm{V}_{[1:n],t+t'})\big)\geq 3$, by Remark \ref{rmk:L_1}, we have $\mathrm{Comp}(\bm{U}_{[1:n],t})\neq \mathrm{Comp}(\bm{U}_{[1:n],t+t'})$. 
     This implies that the compositions of any $P+1$ consecutive columns of $\boldsymbol{U}$ are not all identical.
     Similarly, we can conclude that the compositions of any $P+1$ consecutive rows of $\boldsymbol{U}$ are also not all identical.
     
     Lastly, for $t\in [1:n-1]$ and $k\in [1:3]$, since $V_{t_k,t}\neq V_{t_k,t+1}$, where $t_k\notin \{t,t+1,(t+2 \mod{n})\}$, i.e., $V_{t_k,t}\neq \ast$ and $V_{t_k,t+1}\neq \ast$, we have $U_{t_k,t}\neq U_{t_k,t+1}$.
     This implies that $\bm{U}_{[(k-1)\ell+1:k\ell],t}\neq \bm{U}_{[(k-1)\ell+1:k\ell],t+1}$.
     Consequently, $\bm{U}$ is $(P,\ell)$-valid.
\end{IEEEproof}

We are now ready to calculate the number of valid arrays.
We assume that $n-2$ is a multiple of $q$. In cases where this assumption does not hold, we may need to use either the floor or ceiling function in specific instances.
While this may introduce more intricate notation, it does not influence the asymptotic analysis when employing Stirling’s formula.

\begin{lemma}\label{lem:valid}
  Assume $n\geq 3\ell$, $q|(n-2)$, and $\ell\geq P-1\geq 1$.
  Let $N$ denote the number of $\ell$-valid arrays of size $n\times n$, the following statements hold:
  \begin{itemize}
      \item if $q=2$, then $N\geq \big(1-\frac{3n}{2^{\ell-3}}- \frac{2\cdot(50/\pi)^{P/2}}{(n-2)^{(P-2)/2}} \big)\cdot 2^{n^2-2n+1}$ and $N\geq \frac{1}{5}\cdot 2^{n^2-2n+1}$ when $\ell\geq \log n+9$, $P\geq 8$, and $n\geq 58$;
      \item if $q= 3$, then $N\geq \big(1-\frac{3n}{3^{\ell-3}}- \frac{2\cdot 10800^{P/2}}{\pi^P(n-2)^{P-1}}\big)\cdot 3^{n^2-2n+1}$ and $N \geq \frac{1}{5} \cdot 3^{n^2-2n+1}$ when $\ell\geq \log n+5$, $P\geq 8$, and $n\geq 70$;
      \item if $q\geq 4$, then $N\geq \big(1-\frac{3n}{q^{\ell-3}}- \frac{2\cdot q^{(q+8)P/2}}{(n-2)^{(q-1)P/2-1}}\big)\cdot q^{n^2-2n+1}$ and $N\geq \frac{1}{5} \cdot q^{n^2-2n+1}$ when $\ell\geq \log n+4$, $P\geq 8$, and $n\geq q^5+2$.
  \end{itemize}
\end{lemma}

\begin{proof}
    We prove the lemma by a probabilistic analysis.
    Choose an array $\boldsymbol{Z}\in \Sigma_q^{(n-2)\times n}$ uniformly at random, let $\mathbb{P}$ be the probability that $\bm{V}=f(\bm{Z})$ does not satisfy the conditions outlined in Lemma \ref{rmk:valid}.
    Since $g(\cdot,\cdot)$ is an injection, by Lemma \ref{rmk:valid}, we can calculate $N\geq (1-\mathbb{P})q^{n^2-2n+1}$.
    
    For $t\in [1:n-P]$, let $p_t$ (respectively, $p_t'$) be the probability that $d_{L_1}\big(\mathrm{Comp}(\bm{V}_{[1:n],t}), \mathrm{Comp}(\bm{V}_{[1:n],t+\tilde{t}})\big)\leq 2$ (respectively, $d_{L_1}\big(\mathrm{Comp}(\bm{V}_{t,[1:n]}), \mathrm{Comp}(\bm{V}_{t+\tilde{t},[1:n]})\big)\leq 2$) for $\tilde{t}\in [1:P]$.
    Clearly, we have $p_t=p_t'=p_1$.
    Moreover, for $t\in [1:n-1]$ and $k\in [1:3]$, let $p_{t,k}$ denote the probability that $V_{\tilde{t},t}= V_{\tilde{t},t+1}$ for $\tilde{t}\in [(k-1)\ell+1:k\ell]\setminus \{t,t+1,(t+2 \mod{n})\}$.
    Clearly, we have $p_{t,k}\leq \frac{1}{q^{\ell-3}}$.
    Then by the union bound, we can calculate $\mathbb{P}\leq 2(n-2)p_1+\frac{3n}{q^{\ell-3}}$. 
    
    Let $M$ denote the maximum size of a set of $q$-ary sequences of length $n-2$ with identical composition. 
    By Equation (\ref{eq:M}), we get
    \begin{equation*}
        M\leq \sqrt{\frac{q^q}{(2\pi)^{q-1}(n-2)^{q-3}}} \cdot \frac{q^{n-2}}{n-2}.
    \end{equation*}
    \begin{itemize}
      \item When $q=2$, by Remark \ref{rmk:L_1}, we can compute $p_1\leq \left(\frac{5M}{q^{n-2}}\right)^P\leq \left(\frac{50}{\pi(n-2)}\right)^{\frac{P}{2}}$.
          In this case, we get $\mathbb{P}\leq \frac{3n}{2^{\ell-3}}+ \frac{2(50/\pi)^{P/2}}{(n-2)^{(P-2)/2}}$.
          Moreover, if $\ell\geq \log n+9$, $P\geq 8$, and $n\geq 58$, we have $\frac{50}{\pi(n-2)}\leq 1$ and can further compute $\mathbb{P}\leq \frac{3}{2^6}+\frac{2\cdot(50/\pi)^{4}}{56^{3}}\leq \frac{4}{5}$.
      \item When $q=3$, by Remark \ref{rmk:L_1}, we can compute $p_1\leq \left(\frac{40M}{q^{n-2}}\right)^P\leq \left(\frac{10800}{\pi^2(n-2)^2}\right)^{\frac{P}{2}}$.
          In this case, we get $\mathbb{P}\leq \frac{3n}{3^{\ell-3}}+ \frac{2\cdot(10800)^{P/2}}{\pi^P(n-2)^{P-1}}$.
          Moreover, if $\ell\geq \log n+5$, $P\geq 8$, and $n\geq 70$, we have $\frac{10800}{\pi^2(n-2)^2}\leq 1$ and can further compute $\mathbb{P}\leq \frac{1}{3}+\frac{2\cdot 10800^4}{\pi^8 \cdot 68^7}\leq \frac{4}{5}$.
      \item When $q\geq 4$, by Remark \ref{rmk:L_1}, we can compute $p_1\leq \left(\frac{q^4M}{q^{n-2}}\right)^P\leq \left(\frac{q^{q+8}}{(n-2)^{q-1}}\right)^{\frac{P}{2}}$.
          In this case, we get $\mathbb{P}\leq \frac{3n}{q^{\ell-3}}+ \frac{2q^{(q+8)P/2}}{(n-2)^{(q-1)P/2-1}}$.
          Moreover, if $\ell\geq \log n+4$, $P\geq 8$, and $n\geq q^{5}+2$, we have $\frac{q^{q+8}}{(n-2)^{q-1}}\leq 1$ and can further compute $\mathbb{P}\leq \frac{3}{4}+\frac{2\cdot q^{4q+32}}{q^{5(4q-5)}}=\frac{3}{4}+\frac{2}{q^{16q-57}}\leq \frac{3}{4}+\frac{2}{4^{7}}\leq \frac{4}{5}$.
    \end{itemize}
    Then the conclusion follows.
\end{proof}

\begin{comment}
\begin{remark}
    It is well known that $ \sum_{m=0}^n \binom{n}{m}^3 $ is the Franel number. For sufficiently large $ n $, we can obtain the following sharp estimation from \cite[Page 282]{Kotesovec}:  
\begin{align*}  
    \sum_{m=0}^n \binom{n}{m}^3 \sim \frac{2^{3n+1}}{\sqrt{3} \pi n}.  
\end{align*}  
This estimation may help provide a tighter bound for the quantity $ N $ defined in Lemma \ref{lem:valid}.   
\end{remark}
\end{comment}

With the help of Lemma \ref{lem:valid}, we are now prepared to calculate the redundancy of $\mathcal{C}_2(\boldsymbol{c},\boldsymbol{d},\boldsymbol{d}')$.

\begin{corollary}\label{cor:binary}
  Assume $n\geq 58$ if $q=2$,  $n\geq 70$ if $q\geq 3$, and $n\geq q^5+2$ if $n\geq 4$. Let $\ell=\log n+9$, $P=8$, and $q|(n-2)$.
  There exists a choice of parameters such that the redundancy of $\mathcal{C}_2(\boldsymbol{c},\boldsymbol{d},\boldsymbol{d}')$ (as defined in Construction \ref{constr2}) is at most $(2n-1)\log q+2\log n+4\log (18)+\log 5$, which is optimal up to an additive constant.
\end{corollary}

\begin{proof}
By considering all possible choices for $\boldsymbol{c}\in \Sigma_n^2$, $\boldsymbol{d}\in \Sigma_2^4$, and $\boldsymbol{d}'\in \Sigma_9^4$, we find that the total number of codes $\mathcal{C}_2(\boldsymbol{c},\boldsymbol{d},\boldsymbol{d}')$ is $(18)^4 n^2$, which forms a partition of valid arrays in $\Sigma_q^{n \times n}$. Consequently, by the pigeonhole principle, there exists at least one choice of $\boldsymbol{c}\in \Sigma_n^2$, $\boldsymbol{d}\in \Sigma_2^4$, and $\boldsymbol{d}'\in \Sigma_9^4$ such that the code $\mathcal{C}_2(\boldsymbol{c},\boldsymbol{d},\boldsymbol{d}')$ contains at least $\frac{q^{n^2-2n+1}}{5\cdot(18)^4\cdot n^2}$ codewords. Thus, the conclusion follows.
\end{proof}

\section{Construction of Optimal \texorpdfstring{$(t_r,t_c)$}{}-Criss-Cross Burst-Deletion Correcting Codes Over Arbitrary Alphabets}\label{sec:burst}

In one-dimensional sequences, it is well known that correcting a burst-deletion poses a significant challenge compared to correcting a single deletion error. 
An optimal construction for correcting a burst-deletion was introduced recently in \cite{Sun-25-IT-burst}, while codes capable of correcting a single deletion were reported in 1966 in \cite{Levenshtein-66-SPD-1D}.
In comparison, we will demonstrate that in two-dimensional arrays, the difficulty of correcting a $(t_r, t_c)$-criss-cross burst-deletion is nearly equivalent to that of correcting a $(1, 1)$-criss-cross deletion.  
  
\begin{definition}
  For a given array $\boldsymbol{X} \in \Sigma_q^{n \times n}$, let $\boldsymbol{X}^{(s_r, s_c)}$ be the subarray of $\boldsymbol{X}$ indexed by all rows whose indices are congruent to $s_r$ modulo $t_r$ and all columns whose indices are congruent to $s_c$ modulo $t_c$, for $s_r \in [1:t_r]$ and $s_c \in [1:t_c]$. For simplicity, we assume $t_r \mid n$ and $t_c \mid n$; then  
  
  \[  
    \boldsymbol{X}^{(s_r, s_c)} :=   
    \begin{bmatrix}  
       X_{s_r,s_c} & X_{s_r,s_c+t_c} & \cdots & X_{s_r, n-t_c+s_c} \\
    X_{s_r+t_r,s_c} & X_{s_r+t_r,s_c+t_c} & \cdots & X_{s_r+t_r, n-t_c+s_c} \\
    \vdots & \vdots & \ddots & \vdots \\
    X_{n-t_r+s_r,s_c} & X_{n-t_r+s_r,s_c+t_c} & \cdots & X_{n-t_r+s_r, n-t_c+s_c}
    \end{bmatrix}.  
    \] 
\end{definition} 

\begin{observation}
Observe that $\boldsymbol{X}^{(s_r, s_c)}$ suffers a $(1,1)$-criss-cross deletion for $s_r \in [1: t_r]$ and $s_c \in [1: t_c]$ when $\boldsymbol{X}$ experiences a $(t_r, t_c)$-criss-cross burst-deletion. Additionally, if $\boldsymbol{X}^{(1,1)}$ suffers a row deletion at position $i$ and a column deletion at position $j$, then the row deletion position in $\boldsymbol{X}^{(s_r, s_c)}$ is either $i$ or $i-1$, and the column deletion position in $\boldsymbol{X}^{(s_r, s_c)}$ is either $j$ or $j-1$ for $s_r \in [1: t_r]$ and $s_c \in [1: t_c]$.  
\end{observation}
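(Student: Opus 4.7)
The plan is to verify both parts of the observation by straightforward index bookkeeping. Set up notation so that a $(t_r,t_c)$-criss-cross burst-deletion removes the block of rows $\{a, a+1, \ldots, a+t_r-1\}$ and the block of columns $\{b, b+1, \ldots, b+t_c-1\}$, while, by definition, $\boldsymbol{X}^{(s_r,s_c)}$ consists of the rows of $\boldsymbol{X}$ indexed by $s_r, s_r+t_r, s_r+2t_r, \ldots$ and the columns indexed by $s_c, s_c+t_c, s_c+2t_c, \ldots$

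The first half reduces to the pigeonhole observation that any window of $t_r$ consecutive integers hits each residue class modulo $t_r$ exactly once. Applied to $\{a, a+1, \ldots, a+t_r-1\}$, this produces a unique index $a_{s_r}$ with $a_{s_r}\equiv s_r \pmod{t_r}$, which is precisely the unique row of $\boldsymbol{X}^{(s_r,s_c)}$ that gets erased. The analogous statement for columns yields a unique column of $\boldsymbol{X}^{(s_r,s_c)}$ that is deleted, so $\boldsymbol{X}^{(s_r,s_c)}$ really does suffer a $(1,1)$-criss-cross deletion.

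For the second half, I would parameterize the burst by writing $a - 1 = (m-1) t_r + r_a$ with $r_a \in [0, t_r-1]$. The deleted rows then cycle through the residues $r_a+1, r_a+2, \ldots, t_r, 1, 2, \ldots, r_a$ modulo $t_r$, and a short direct computation shows that $a_{s_r} = (m-1)t_r + s_r$ when $s_r > r_a$, and $a_{s_r} = mt_r + s_r$ when $s_r \le r_a$. Dividing by $t_r$ to convert the absolute row index into the position of $\boldsymbol{X}^{(s_r,s_c)}$ gives $i_{s_r}=m$ or $m+1$ in the two cases, respectively. Specializing to $s_r=1$ pins down $i$ as either $m$ (when $r_a=0$) or $m+1$ (when $r_a\ge 1$), and comparing the two possibilities for $i_{s_r}$ against $i$ term-by-term shows $i_{s_r}\in\{i-1, i\}$ in every case. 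The column argument is completely symmetric, yielding $j_{s_c}\in\{j-1, j\}$.

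The main obstacle, to the extent that there is one, is purely cosmetic: avoiding off-by-one errors between the 1-indexing used throughout the paper and the natural 0-indexed residues arising in the case split, and verifying that both cases really do collapse to positions that differ by at most one. No genuinely new idea is needed beyond the pigeonhole on residues and the observation that collapsing consecutive integer indices by a factor of $t_r$ can shift positions by at most one.
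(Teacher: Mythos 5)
Your argument is correct: the paper states this as an Observation without proof, and your pigeonhole-on-residues argument plus the explicit case split $s_r\le r_a$ versus $s_r>r_a$ (giving subarray deletion position $m+1$ or $m$, hence always $i$ or $i-1$) is exactly the index bookkeeping the authors implicitly rely on. The only point worth making explicit is that rows (columns) beyond the deleted block shift by exactly $t_r$ (resp.\ $t_c$), so they keep their residue classes, which is what guarantees the surviving entries of the corrupted array, sampled at residues $(s_r,s_c)$, coincide with $\boldsymbol{X}^{(s_r,s_c)}$ after a single $(1,1)$-criss-cross deletion.
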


To this end, we can encode $\boldsymbol{X}^{(1,1)}$ into some $(1,1)$-criss-cross deletion correcting code, with the additional constraint that decoding $\boldsymbol{X}^{(1,1)}$, which is distorted by a $(1,1)$-criss-cross deletion, can enable us to precisely identify the row and column deletion positions.   
Then, for $s_r \in [1: t_r]$ and $s_c \in [1: t_c]$ with $(s_r, s_c) \neq (1, 1)$, we require that $\boldsymbol{X}^{(s_r, s_c)}$ can correct a $(1,1)$-criss-cross deletion given a row interval of two containing the row deletion position and a column interval of length two containing the column deletion position. These requirements can be satisfied using the idea introduced in Construction \ref{constr2} and Remarks \ref{rmk:inversion} and \ref{rmk:row}.  
For completeness, we provide the entire construction of $(t_r, t_c)$-criss-cross burst-deletion correcting codes as follows.

\begin{construction}
  Let $P, \ell$ be positive integers satisfying $\frac{n}{t_r}\geq 3\ell$, $\frac{n}{t_c}\geq \ell$, and $\ell\geq P-1$. 
  For fixed sequences $\boldsymbol{c} = (c_1, c_2)$ with $c_1 \in \Sigma_{\frac{n}{t_r}}$ and $c_2 \in \Sigma_{\frac{n}{t_c}}$, $\boldsymbol{d} = (d_1, d_2, d_3,d_4) \in \Sigma_2^{4}$, $\boldsymbol{d}' = (d_1', d_2', d_3',d_4') \in \Sigma_{P+1}^{4}$, and $\boldsymbol{d}'' = (d_5'', d_6'', \cdots,d_{4t_rt_c}'') \in \Sigma_2^{4t_rt_c-4}$, we define the code $\mathcal{C}_3(\boldsymbol{c}, \boldsymbol{d},\boldsymbol{d}',\boldsymbol{d}'')$ in which each array $\boldsymbol{X}$ satisfies the following constraints:  
    \begin{itemize}  
        \item $\boldsymbol{X}^{(1,1)}$ belongs to the code $\mathcal{C}_2(\boldsymbol{c}, \boldsymbol{d},\bm{d}')$ defined in Construction \ref{constr2}, satisfying the additional constraint that any two consecutive rows are distinct; 
        \item For $s_r \in [1: t_r]$ and $s_c \in [1: t_c]$ with $(s_r, s_c) \neq (1, 1)$,  
        \begin{itemize}  
            \item $\boldsymbol{X}^{(s_r, s_c)}$ is $\ell$-weakly-valid;
            \item $\mathrm{Inv}\big(\mathrm{CIR}(\boldsymbol{X}_{[(k-1)\ell+1:k\ell], [1:\frac{n}{t_c}]}^{(t_r, t_c)})\big) \equiv d_{4t_c(s_r-1)+4(s_c-1)+k}'' \pmod{2}$ for $k \in [1:3]$;  
            \item $\mathrm{Inv}\big(\mathrm{RIR}(\boldsymbol{X}^{(t_r, t_c)})\big) \equiv d_{4t_c(s_r-1)+4s_c}'' \pmod{2}$.  
        \end{itemize}  
    \end{itemize}  
\end{construction}

\begin{remark}
Following a similar discussion in the proof of Theorem \ref{thm:binary}, one can demonstrate that $\mathcal{C}_3(\boldsymbol{c}, \boldsymbol{d},\boldsymbol{d}',\boldsymbol{d}'')$ is a $(t_r, t_c)$-criss-cross burst-deletion correcting code. Furthermore, analogous to the proof of Corollary \ref{cor:binary}, it can be shown that there exists a choice of parameters such that the redundancy of $\mathcal{C}_3(\boldsymbol{c}, \boldsymbol{d},\boldsymbol{d}'\boldsymbol{d}'')$ is at most $(t_r + t_c)n \log q + 2 \log n + O_{q, t_r, t_c}(1)$. Therefore, by Remark \ref{rmk:burst}, this construction is optimal (up to an additive constant) quantified by redundancy. %when $\frac{n}{t_r} \geq 3\ell_r$, $\frac{n}{t_c} \geq \ell_c$, $n \geq q^2$, and $q \mid n$, where $(\ell_r,\ell_c) = \big(\lceil \log \frac{n}{t_r} \rceil+5, \lceil \log \frac{n}{t_c} \rceil+5\big) $ if $q = 2$ and $\ell = 2$ if $q \geq 3$, 
\end{remark}

\begin{remark}
  When $q\geq 3$, we can also employ the code $\mathcal{C}_1(c,d)$ described in Construction \ref{constr1} to create optimal $(t_r, t_c)$-criss-cross burst-deletion correcting codes.
\end{remark}

\section{Conclusion}\label{sec:concl}
In this paper, we establish a sphere-packing type lower bound and a Gilbert-Varshamov type upper bound on the redundancy of $(t_r,t_c)$-criss-cross deletion correcting codes.
For the case where $(t_r,t_c)=(1,1)$, we present code constructions whose redundancy approaches the sphere-packing type lower bound, up to an additive constant.
Moreover, our methods can be used to construct optimal $(t_r,t_c)$-criss-cross burst-deletion correcting codes. %when both unidirectional deletions are consecutive.

Two intriguing open problems in two-dimensional deletion-corrections arise from this work:
\begin{itemize}
  \item Can our code constructions admit polynomial-time encoders while maintaining optimal redundancy? Current existential proofs, which rely on the pigeonhole principle, the probabilistic method, and other techniques, leave open the question of efficient encoding implementations.  
  \item Do there exist $(t_r,t_c)$-criss-cross (non-burst) deletion correcting codes achieving redundancy $(t_r + t_c)n\log q + (t_r + t_c)\log n + O_{q,t_r,t_c}(1)$ matching our sphere-packing lower bound?
\end{itemize}

\end{document}